\DeclareMathAlphabet\mathbfscr{U}{eus}{b}{n}
\newcommand{\lambdamumu}{$\lambda\mu\tilde\mu$-calculus}
\newcommand{\surfacelang}{\textbf{Fun}}
\newcommand{\targetlang}{\textbf{Core}}
\newcommand{\natlit}[1]{\ensuremath{\ulcorner#1\urcorner}}
\newcommand{\tyint}{\mathbf{Int}}
\newcommand{\ifzero}[3]{\mathbf{ifz}(#1,#2,#3)}
\newcommand{\case}[2]{\mathbf{case}\ #1\ \mathbf{of}\ \{ #2 \}}
\newcommand{\cocase}[1]{\mathbf{cocase}\ \{ #1 \}}
\newcommand{\letin}[3]{\mathbf{let}\ #1 = #2\ \mathbf{in}\ #3}
\newcommand{\xvdash}[1]{\vdash_{#1}}
\newcommand{\jump}[2]{\mathbf{goto}(#1;#2)}
\newcommand{\labelterm}[2]{\mathbf{label}\ #1\ \{#2 \}}
\newcommand{\ite}[3]{\ensuremath{\mathbf{if}\ #1\ \mathbf{then}\ #2\ \mathbf{else}\ #3}}
\newcommand{\reducesto}{\triangleright}
\newcommand{\vartop}{\bigstar}
\newcommand{\valueof}[1]{\mathfrak{#1}}
\newcommand{\cut}[2]{\ensuremath{\langle #1 \mid #2 \rangle}}
\newcommand{\SCcase}[1]{\ensuremath{\mathbf{case}\ \{ #1 \}}}
\newcommand{\SCcocase}[1]{\ensuremath{\mathbf{cocase}\ \{ #1 \}}}
\newcommand{\wrap}[1]{:^{\text{\tiny#1}}}
\newcommand{\prd}{\wrap{prd}}
\newcommand{\cnt}{\wrap{cns}}
\newcommand{\wellformed}[1]{#1\ \textsc{Ok}}
\newcommand{\translate}[1]{\llbracket #1 \rrbracket}
\newcommand{\focus}[1]{\ensuremath{\mathcal{F}(#1)}}
\newcommand{\fresh}[1]{\ensuremath{(#1\text{ fresh})}}
\newcommand{\novalue}[1]{\ensuremath{(#1\text{ not a value})}}
\theoremstyle{remark}
\newtheorem{remark}{Remark}
\newtheorem{frameddefinition}{Definition}[section]
\begin{document}

\title{Grokking the Sequent Calculus (Functional Pearl)}

\keywords{Intermediate representations, continuations, codata types, control effects}


\begin{CCSXML}
  <ccs2012>
     <concept>
         <concept_id>10003752.10003753.10003754.10003733</concept_id>
         <concept_desc>Theory of computation~Lambda calculus</concept_desc>
         <concept_significance>500</concept_significance>
         </concept>
     <concept>
         <concept_id>10011007.10011006.10011041</concept_id>
         <concept_desc>Software and its engineering~Compilers</concept_desc>
         <concept_significance>500</concept_significance>
         </concept>
     <concept>
         <concept_id>10011007.10011006.10011008.10011024.10011027</concept_id>
         <concept_desc>Software and its engineering~Control structures</concept_desc>
         <concept_significance>300</concept_significance>
         </concept>
   </ccs2012>
\end{CCSXML}
  
\ccsdesc[500]{Theory of computation~Lambda calculus}
\ccsdesc[500]{Software and its engineering~Compilers}
\ccsdesc[300]{Software and its engineering~Control structures}

\author{David Binder}
\orcid{0000-0003-1272-0972}
\affiliation{
  \department{Department of Computer Science}
  \institution{University of Tübingen}
  \streetaddress{Sand 14}
  \city{Tübingen}
  \postcode{72076}
  \country{Germany}
}
\email{david.binder@uni-tuebingen.de}

\author{Marco Tzschentke}
\orcid{0009-0004-8834-2984}
\affiliation{
  \department{Department of Computer Science}
  \institution{University of Tübingen}
  \streetaddress{Sand 14}
  \city{Tübingen}
  \postcode{72076}
  \country{Germany}
}
\email{marco.tzschentke@uni-tuebingen.de}

\author{Marius Müller}
\orcid{0000-0002-0260-6298}
\affiliation{
  \department{Department of Computer Science}
  \institution{University of Tübingen}
  \streetaddress{Sand 14}
  \city{Tübingen}
  \postcode{72076}
  \country{Germany}
}
\email{mari.mueller@uni-tuebingen.de}

\author{Klaus Ostermann}
\orcid{0000-0001-5294-5506}
\affiliation{
  \department{Department of Computer Science}
  \institution{University of Tübingen}
  \streetaddress{Sand 14}
  \city{Tübingen}
  \postcode{72076}
  \country{Germany}
}
\email{klaus.ostermann@uni-tuebingen.de}

\begin{abstract}
    The sequent calculus is a proof system which was designed as a more symmetric alternative to natural deduction.
    The \lambdamumu\ is a term assignment system for the sequent calculus and a great foundation for compiler intermediate languages due to its first-class representation of evaluation contexts.
    Unfortunately, only experts of the sequent calculus can appreciate its beauty.
    To remedy this, we present the first introduction to the \lambdamumu\ which is not directed at type theorists or logicians but at compiler hackers and programming-language enthusiasts.
    We do this by writing a compiler from a small but interesting surface language to the \lambdamumu\ as a compiler intermediate language.
\end{abstract}

\maketitle

\section{Introduction}
\label{sec:intro}
Suppose you have just implemented your own small functional language.
To test it, you write the following function which multiplies all the numbers contained in a list:
\begin{align*}
  \mathbf{def}\ \text{mult}(l) \coloneq \case{l}{\, \text{Nil} \Rightarrow 1, \text{Cons}(x,xs) \Rightarrow x * \text{mult}(xs)\,}
\end{align*}
What bugs you about this implementation is that you know an obvious optimization:
The function should directly return zero if it encounters a zero in the list.
There are many ways to achieve this, but you choose to extend your language with labeled expressions and a goto instruction.
This allows you to write the optimized version:
\begin{align*}
  &\mathbf{def}\ \text{mult}(l) \coloneq \labelterm{\alpha}{\, \text{mult'}(l;\alpha)\,} \\
  &\mathbf{def}\ \text{mult'}(l;\alpha) \coloneq \case{l}{\, \text{Nil} \Rightarrow 1, \text{Cons}(x,xs) \Rightarrow \ifzero{x}{\jump{0}{\alpha}}{x * \text{mult'}(xs; \alpha)}\,}
\end{align*}
You used $\labelterm{\alpha}{\text{mult'}(l;\alpha)}$ to introduce a label $\alpha$ around the call to the helper function $\text{mult'}$ which takes this label as an additional argument (we use $;$ to separate the label argument from the other arguments), and $\jump{0}{\alpha}$ to jump to this label $\alpha$ with the expression $0$ in the recursive helper function.
But since your language now has control effects, you need to reconsider how you want to compile and optimize programs.
In particular, you have to decide on an appropriate intermediate language which can express these control effects.
In this paper, we introduce you to one such intermediate language: the sequent-calculus-based \lambdamumu.
The result of compiling the efficient multiplication function to the \lambdamumu\ is:
\begin{align*}
  &\mathbf{def}\ \text{mult}(l;\alpha) \coloneq \text{mult'}(l;\alpha,\alpha) \\
  &\mathbf{def}\ \text{mult'}(l;\alpha, \beta) \coloneq \\
  & \quad \cut{l}{\SCcase{\text{Nil} \Rightarrow \cut{1}{\beta}, \text{Cons}(x,xs) \Rightarrow \ifzero{x}{\cut{0}{\alpha}}{\text{mult'}(xs;\alpha,\tilde\mu z.*(x,z;\beta))}}}
\end{align*}
Here is how you read this snippet:
Besides the list argument $l$, the definition $\mathbf{def}\ \text{mult}(l;\alpha) \coloneq \ldots$ takes an argument $\alpha$ which indicates how the computation should continue once the result of the multiplication is computed (we again use $;$ to separate these two kinds of arguments).
The helper function $\text{mult'}$ takes a list argument $l$ and two arguments $\alpha$ and $\beta$; the argument $\beta$ indicates where the function should return to on a normal recursive call while $\alpha$ indicates the return point of a short-circuiting computation.
In the body of $\text{mult'}$ we use $\cut{l}{\SCcase{ \text{Nil} \Rightarrow \ldots, \text{Cons}(x,xs) \Rightarrow \ldots}}$ to perform a case split on the list $l$.
If the list is Nil, then we use $\cut{1}{\beta}$ to return 1 to $\beta$, which is the return for a normal recursive call.
If the list has the form Cons$(x, xs)$ and $x$ is zero, we return with $\cut{0}{\alpha}$, where $\alpha$ is the return point which short-circuits the computation.
If $x$ isn't zero, then we have to perform the recursive call $\text{mult'}(xs;\alpha,\tilde\mu z.*(x,z;\beta))$, where we use $\tilde\mu z.*(x,z;\beta)$ to bind the result of the recursive call to the variable $z$ before multiplying it with $x$ and returning to $\beta$.
Don't be discouraged if this looks complicated at the moment; the main part of this paper will cover everything in much more detail.

\begin{wrapfigure}{R}{0.55\textwidth}
    \includegraphics[width=0.55\textwidth]{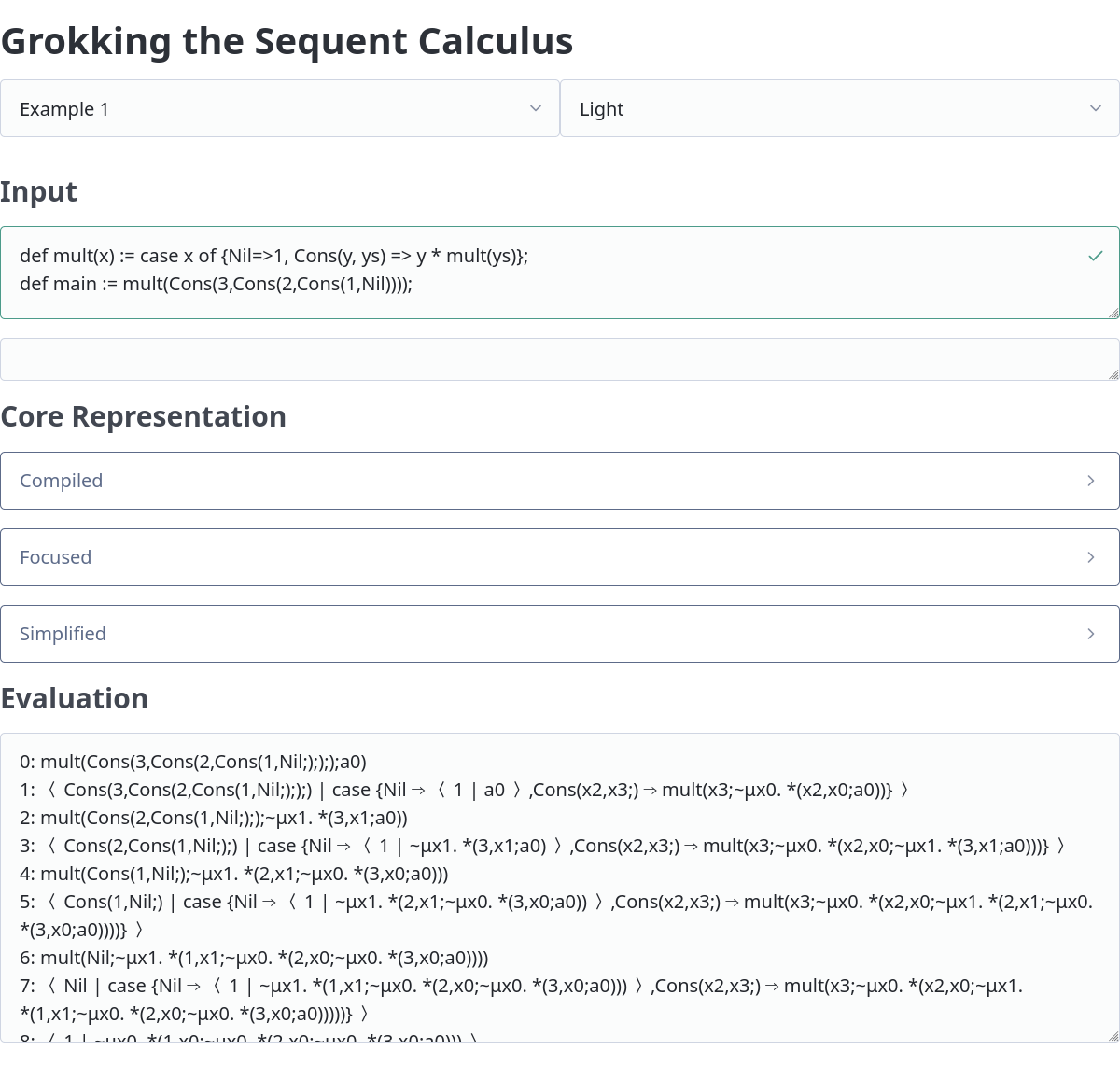}
    \Description{Screenshot of the online evaluator.}
    \caption{Screenshot of the online evaluator.}
    \label{fig:screenshot}
\end{wrapfigure}

The \lambdamumu\ that you have just seen was first introduced by \citet{Curien2000duality} as a solution to a long-standing open question:
What should a term language for the sequent calculus look like?
The sequent calculus is one of two influential proof calculi introduced by \citet{Gentzen1935a,Gentzen1935b} in a single paper, the other calculus being natural deduction.
The term language for natural deduction is the ordinary lambda calculus, but it was difficult to find a good term language for the sequent calculus.
After it had been found, the \lambdamumu\ was proposed as a better foundation for compiler intermediate languages, for example by \citet{Downen2016sequent}.
Despite this, most language designers and compiler writers are still unfamiliar with it.
This is the situation that we hope to remedy with this pearl.

We frequently discuss ideas which involve the \lambdamumu\ with students and colleagues and therefore have to introduce them to its central ideas.
But we usually cannot motivate the \lambdamumu\ as a term assignment system for the sequent calculus, since most of them are not familiar with it.
We instead explain the \lambdamumu\ on the whiteboard by compiling small functional programs into it.
Such an introduction is regrettably still missing in the published literature; most existing presentations either presuppose knowledge of the sequent calculus or otherwise spend a lot of space introducing it first.
We believe that if one can understand the lambda calculus without first learning about natural deduction proofs, then one should also be able to understand the \lambdamumu\ without knowing the sequent calculus\footnote{For the interested reader, we show in \cref{sec:appendix:sequent-calculus-relationship} how the sequent calculus and the \lambdamumu\ are connected.}.

Why are we excited about the \lambdamumu, and why do we think that more people should become familiar with its central ideas and concepts?
The main feature which distinguishes the \lambdamumu\ from the lambda calculus is its first-class treatment of evaluation contexts.
An evaluation context is the remainder of the program which runs after the current subexpression we are focused on finishes evaluating.

This becomes clearer with an example:
When we want to evaluate the expression $(2+3) * 5$, we first have to focus on the subexpression $2+3$ and evaluate it to its result $5$.
The remainder of the program, which will run after we have finished the evaluation, can be represented with the evaluation context $\square * 5$.
We cannot bind an evaluation context like $\square * 5$ to a variable in the lambda calculus, but in the \lambdamumu\ we can bind such evaluation contexts to covariables.
Furthermore, the $\mu$-operator gives direct access to the evaluation context in which the expression is currently evaluated.
Having such direct access to the evaluation context is not always necessary for a programmer who wants to write an application, but it is often important for compiler implementors who write optimizations to make programs run faster.
One solution that compiler writers use to represent evaluation contexts in the lambda calculus is called continuation-passing style.
In continuation-passing style, an evaluation context like $\square * 5$ is represented as a function $\lambda x. x * 5$.
This solution works, but the resulting types which are used to type a program in this style are arguably hard to understand.
Being able to easily inspect these types can be very valuable, especially for intermediate representations, where terms tend to look complex.
The promise of the \lambdamumu\ is to provide the expressive power of programs in continuation-passing style without having to deal with the type-acrobatics that are usually associated with it.

The remainder of this paper is structured as follows:
\begin{itemize}
    \item In \cref{sec:syntax} we introduce the surface language \surfacelang\ and show how we can translate it into the sequent-calculus-based language \targetlang.
          The surface language is mostly an expression-oriented functional programming language, but we have added some features such as codata types and control operators whose translations provide important insights into how the \lambdamumu\ works.
          In this section, we also compare how redexes are evaluated in both languages.

    \item In \cref{sec:focusing} we discuss static and dynamic focusing, which are two closely related techniques for lifting subexpressions which are not values into a position where they can be evaluated.
    \item \Cref{sec:typing} introduces the typing rules for \surfacelang\ and \targetlang\ and proves standard results about typing and evaluation.
    \item We show why we are excited about the \lambdamumu\ in \cref{sec:insights}.
          We present various programming language concepts which become much clearer when we present them in the \lambdamumu:
          We show that let-bindings are precisely dual to control operators, that data and codata types are two perfectly dual ways of specifying types, and that the case-of-case transformation is nothing more than a $\mu$-reduction.
          These insights are not novel for someone familiar with the \lambdamumu, but not yet as widely known as they should be.
    \item Finally, in  \cref{sec:related-work} we discuss related work and provide pointers for further reading.
          We conclude in \cref{sec:conclusion}.
\end{itemize}

This paper is accompanied by a Haskell implementation which we also make available as an interactive website (cf.~\cref{fig:screenshot}).
You can run the examples presented in this paper in the online evaluator.

\section{Translating To Sequent Calculus}
\label{sec:syntax}
In this section, we introduce \surfacelang, an expression-oriented functional programming language, together with its translation into the sequent-calculus-based intermediate language \targetlang.
We present both languages and the translation function $\translate{-}$ in multiple steps, starting with arithmetic expressions and adding more features in later subsections.
We postpone the typing rules for both languages until \cref{sec:typing}.

\subsection{Arithmetic Expressions}
\label{subsec:syntax-arith-expressions}

We begin with arithmetic expressions which consist of variables, integer literals,  binary operators and $\mathbf{ifz}$, a conditional expression which checks whether its first argument is equal to zero.
The syntax of arithmetic expressions for \surfacelang\ and \targetlang\ is given in \cref{def:syntax:arithmetic-expressions}.
\medskip

\begin{frameddefinition}[Arithmetic Expressions]\hfill
  \begin{equation*}
    x,y,z,\ldots \in \textsc{Variables} \quad \vartop,\alpha,\beta,\gamma,\ldots \in \textsc{Covariables} \quad \odot  \in \lbrace \ast, +, - \rbrace
  \end{equation*}
  \vspace{0.05cm}

  \begin{minipage}[]{0.37\textwidth}
    \[
      \begin{array}{r l}
        \multicolumn{2}{c}{\surfacelang} \\[0.2cm]
        t & \Coloneqq x \mid \natlit{n} \mid t \odot t \mid \ifzero{t}{t}{t} \\
        & \phantom{\Coloneqq} \\
        & \phantom{\Coloneqq} \\
      \end{array}
    \]
  \end{minipage}
  \hfill\vline\hfill
  \begin{minipage}{0.56\textwidth}
    \[
      \begin{array}{r l r}
        \multicolumn{3}{c}{\targetlang} \\[0.2cm]
        p & \Coloneqq x \mid \natlit{n} \mid \mu\alpha.s & \emph{Producer} \\
        c & \Coloneqq \alpha & \emph{Consumer} \\
        s & \Coloneqq \odot(p,p;c) \mid \ifzero{p}{s}{s} \mid \cut{p}{c} & \emph{Statement}
      \end{array}
    \]
  \end{minipage}
  \vspace{0.2cm}

  \begin{align*}
    \translate{x} & \coloneq x & \translate{t_1 \odot t_2} &\coloneq \mu \alpha.\odot(\translate{t_1},\translate{t_2};\alpha) & \fresh{\alpha}\\
    \translate{\natlit{n}} &\coloneq \natlit{n} &
    \translate{\ifzero{t_1}{t_2}{t_3}} &\coloneq \mu\alpha.\ifzero{\translate{t_1}}{\cut{\translate{t_2}}{\alpha}}{\cut{\translate{t_3}}{\alpha}} & \fresh{\alpha}
  \end{align*}
  \label{def:syntax:arithmetic-expressions}
\end{frameddefinition}
\medskip

In \surfacelang\ there is only one syntactic category: terms $t$.
These terms can either be variables $x$, literals $\natlit{n}$, binary operators $t + t$, $t * t$ and $t - t$, or a conditional $\ifzero{t}{t_0}{t_1}$.
This conditional evaluates to $t_0$ if $t$ evaluates to $\natlit{0}$, or to $t_1$ otherwise.
In contrast to this single category, \targetlang\ uses three different syntactic categories: producers $p$, consumers $c$ and statements $s$.
These categories are directly inherited from the \lambdamumu, and it is important to understand their differences:
\begin{description}
  \item[Producers] All constructs in \targetlang\ which \emph{construct} or \emph{produce} an element of some type belong to the syntactic category of producers.
  In other words, producers correspond to \enquote{introduction forms} or \enquote{proof terms}, and every term of the language \surfacelang\ is translated to a producer in \targetlang.
  \item[Consumers] Consumers are probably less intuitive than producers since they do not correspond directly to any term of the language \surfacelang.
  The basic idea is that if some consumer $c$ has type $\tau$, then $c$ \emph{consumes} or \emph{destructs} a producer of type $\tau$.
  If you have encountered evaluation contexts or continuations before, then it is helpful to think of consumers of type $\tau$ as continuations or evaluation contexts for a producer of type $\tau$.
  And if you are familiar with the Curry-Howard correspondence, then you can think of consumers as refutations or direct evidence that a proposition is false.
  \item[Statements] Statements are the ingredient which make computation \emph{happen}; without statements, we would only have static objects without any dynamic behavior.
  Here is a non-exhaustive list of examples for statements:
  Every IO action which reads from or prints to the console or a file should be represented as a statement in \targetlang.
  Computations on primitive types such as machine integers should be statements.
  Finally, everything which is a redex in an expression-based language should also correspond to a statement in \targetlang.
  Since statements themselves only compute and do not return anything they do not have a type.
\end{description}

After these general remarks, let us now look at how arithmetic expressions are represented in the language \targetlang.
Variables $x$ and literals $\natlit{n}$ both belong to the category of producers, but binary operators are represented as statements $\odot(p_1,p_2;c)$.
First, let us explain why they are represented as statements instead of producers.
The idea is that a binary operator on primitive integers has to be evaluated directly by the arithmetic logic unit (ALU) of the underlying machine.
And any operation which directly invokes the machine should belong to the same syntactic category as a print or other IO instruction: statements.
The machine does not return a result; rather, it reads inputs from registers and makes the result available in a register for further computation.
This is also reflected in the second surprising aspect: the operator has three instead of two arguments.
The two producers $p_1$ and $p_2$ correspond to the usual arguments, but the third consumer argument $c$ says what should happen to the result once the binary operator has been evaluated.
This is similar to the continuation argument of a function in continuation-passing style.
Binary operators $\odot(p_1,p_2;c)$ also display a syntactic convention we use: whenever some construct has arguments of different syntactic categories, we use a semicolon instead of a comma to separate them.

We can immediately see that the result of $\translate{p_1 + p_2}$ should contain the statement $+(\translate{p_1},\translate{p_2};?)$, but we still have to figure out which consumer to plug in at the third-argument place, and how to convert this statement into a producer.
We can do this with a $\mu$-abstraction in \targetlang, which turns a statement into a producer while binding a covariable $\alpha$: $\mu \alpha.+(\translate{p_1},\translate{p_2};\alpha)$.

The statement $\mathbf{ifz}$ works similarly to binary operators:
It is a computation which checks if the producer $p$ is zero and then continues with one of its two branches.
These branches are also statements, indicating which computation to run after the condition has been evaluated.
In the language \surfacelang\ the two branches were terms, so we now have to find a way to transform two producers into two statements.
We can do this by using a cut $\cut{p}{c}$ which combines a producer and a consumer of the same type to obtain a statement in each branch: $\ifzero{\translate{t_1}}{\cut{\translate{t_2}}{\,?\,}}{\cut{\translate{t_3}}{\,?\,}}$.
We can then use the same covariable $\alpha$ in both statements to represent the fact that the we want the result in either branch to return to the same point in the program; we use a surrounding $\mu$-binding again to bind this covariable: $\mu \alpha.\ifzero{\translate{t_1}}{\cut{\translate{t_2}}{\alpha}}{\cut{\translate{t_3}}{\alpha}}$.

Let us now see how arithmetic expressions are evaluated.
\cref{def:syntax:arithmetic-expressions-eval} introduces the syntax of values and covalues, and shows how to reduce immediate redexes.
We use a simple syntactic convention here:
The metavariable for a value of terms $t$ is $\valueof{t}$, the values of producers $p$ are written $\valueof{p}$ and the covalues which correspond to consumers $c$ are written $\valueof{c}$.
We use the symbol $\reducesto$ for reduction in both \surfacelang\ and \targetlang\ (and write $\reducesto^{\ast}$ when multiple steps are performed at once).

\medskip
\begin{frameddefinition}[Evaluation for Arithmetic Expressions]\hfill

  \begin{minipage}{0.45\textwidth}
    \[
      \begin{array}{l c l r}
      \multicolumn{4}{c}{\surfacelang} \\[0.2cm]
      \valueof{t} & \Coloneqq & \natlit{n} &  \emph{Values}\\
      & \phantom{X} & & \\
      \end{array}
    \]
    \begin{align*}
      \ifzero{\natlit{0}}{t_1}{t_2} &\reducesto t_1 \\
      \ifzero{\natlit{n}}{t_1}{t_2} &\reducesto t_2\quad  \text{(if $n\neq 0$)}\\
      \natlit{n}\odot \natlit{m} &\reducesto \natlit{n\odot m} \\
      \phantom{X} &
    \end{align*}
  \end{minipage}
  \hfill\vline\hfill
  \begin{minipage}{0.45\textwidth}
    \[
      \begin{array}{l c l r}
        \multicolumn{4}{c}{\targetlang} \\[0.2cm]
        \valueof{p} & \Coloneqq & \natlit{n} & \emph{Values}\\
        \valueof{c} & \Coloneqq & \alpha & \emph{Covalues}
      \end{array}
    \]
    \begin{align*}
      \ifzero{\natlit{0}}{s_1}{s_2} &\reducesto s_1 \\
      \ifzero{\natlit{n}}{s_1}{s_2} &\reducesto s_2 \quad \text{(if $n\neq 0$)}\\
      \odot(\natlit{n},\natlit{m};c) &\reducesto \cut{\natlit{n \odot m}}{c} \\
       \cut{\mu\alpha.s}{\valueof{c}} &\reducesto s[\valueof{c}/\alpha]
    \end{align*}
  \end{minipage}
  \label{def:syntax:arithmetic-expressions-eval}
\end{frameddefinition}
\medskip

Values and the evaluation of redexes in \surfacelang\ is straightforward, the only noteworthy aspect is that the two rules for $\ifzero{\cdot}{t_1}{t_2}$ do not require $t_1$ and $t_2$ to be values.
Thus, let us proceed with the discussion of the language \targetlang.

The first interesting aspect of the language \targetlang\ is that there are both values and covalues.
This can be explained by the role that values play in operational semantics: they specify the subset of terms that we are allowed to substitute for a variable.
And since we have both variables which stand for producers and covariables which stand for consumers, we need both values and covalues as the respective subsets which we are allowed to substitute for a variable or covariable.

The second interesting aspect of the language \targetlang\ is that only statements are reduced, not producers or consumers.
This substantiates our remark from above that it is statements that introduce dynamism into the language by driving computation.
It also contributes to the feeling that reduction in the language is close to the evaluation of an abstract machine and that the statements of \targetlang\ correspond to the states of such an abstract machine.

We are still faced with a small problem when we want to show that a term of \surfacelang\ evaluates to the same result as its translation into \targetlang: We have only specified the reduction for statements but not for producers.
We can easily solve this problem by introducing a special covariable $\vartop$ which acts as the \enquote{top-level} consumer of an evaluation.
Using $\vartop$ we can then evaluate the statement $\cut{\translate{t}}{\vartop}$ instead of the producer $\translate{t}$.

\begin{example}
  \label{ex:arithmetic-expression}
  Consider the two terms $\natlit{2}*\natlit{3}$ and $\ifzero{\natlit{2}}{\natlit{5}}{\natlit{10}}$ of \surfacelang.
  Their respective translations into \targetlang\ are $\mu\alpha.*(\natlit{2},\natlit{3};\alpha)$ and
  $\mu \alpha. \ifzero{\natlit{2}}{\cut{\natlit{5}}{\alpha}}{\cut{\natlit{10}}{\alpha}}$.
  When we wrap them into a statement using the top-level continuation $\vartop$, we observe the following evaluation:
  \begin{align*}
    &\cut{\mu\alpha.*(\natlit{2},\natlit{3};\alpha)}{\vartop}\ \reducesto\ *(\natlit{2},\natlit{3};\vartop) \ \reducesto\ \cut{\natlit{6}}{\vartop} \\
    &\cut{\mu \alpha. \ifzero{\natlit{2}}{\cut{\natlit{5}}{\alpha}}{\cut{\natlit{10}}{\alpha}}}{\vartop}
    \ \reducesto\
    \ifzero{\natlit{2}}{\cut{\natlit{5}}{\vartop}}{\cut{\natlit{10}}{\vartop}}
    \ \reducesto\ \cut{\natlit{10}}{\vartop}
  \end{align*}
  We have successfully evaluated the first term to the result $\natlit{6}$ and the second term to the result $\natlit{10}$.
\end{example}

In the following, we will often leave out the first reduction step in examples, thus silently replacing the covariable bound by the outermost $\mu$-binding with the top-level consumer $\vartop$.

Here is a bigger problem that we haven't addressed yet.
The evaluation rules in the present section do not allow to evaluate nested expressions like $(\natlit{2} * \natlit{4}) + \natlit{5}$ in \surfacelang\ or its translation $\mu \alpha.+(\mu \beta.*(\natlit{2},\natlit{4};\beta),\natlit{5};\alpha)$ in \targetlang.
We will discuss this problem and its solution in more detail in \cref{sec:focusing}.

\subsection{Let Bindings}
\label{subsec:syntax-let-bindings}

Let-bindings are important since we can use them to eliminate duplication and make code more readable.
In this section we introduce let-bindings to \surfacelang\ for an additional reason: they allow us to introduce the second construct which gives the \lambdamumu\ its name: $\tilde\mu$-abstractions.
\medskip

\begin{frameddefinition}[Let-Bindings and $\tilde\mu$-abstractions]
  \hfill
  \smallskip

  \begin{minipage}{0.45\textwidth}
    \[
      \begin{array}{l c l}
        \multicolumn{3}{c}{\surfacelang} \\[0.2cm]
        t & \Coloneqq & \ldots \mid \letin{x}{t}{t} \\
        \phantom{\Coloneqq}
      \end{array}
    \]
    \begin{align*}
      \letin{x}{\valueof{t}}{t} \reducesto t[\valueof{t}/x]
    \end{align*}
  \end{minipage}
  \hfill\vline\hfill
  \begin{minipage}{0.45\textwidth}
    \[
      \begin{array}{l c l}
        \multicolumn{3}{c}{\targetlang} \\[0.2cm]
        c & \Coloneqq & \ldots \mid \tilde{\mu}x.s \\
        \valueof{c} & \Coloneqq & \ldots \mid \tilde\mu x.s\\
      \end{array}
    \]
    \begin{align*}
      \cut{\valueof{p}}{\tilde{\mu}x.s}\reducesto s[\valueof{p}/x]
    \end{align*}
  \end{minipage}

  \begin{align*}
    \translate{\letin{x}{t_1}{t_2}} \coloneq \mu \alpha.\cut{\translate{t_1}}{\tilde\mu x.\cut{\translate{t_2}}{\alpha}} &\quad \fresh{\alpha}
  \end{align*}
\end{frameddefinition}
\medskip

The let-bindings in \surfacelang\ are standard and are evaluated by substituting the \emph{value} $\valueof{t}$ for the variable $x$ in the body which is a term.
The analogue of a let-binding in \surfacelang\ is a $\tilde\mu$-binding in \targetlang\ which also binds a variable, with the difference that the body of a $\tilde\mu$-binding is a statement.
It can easily be seen that $\tilde\mu$-bindings are the precise dual of $\mu$-bindings that we have already introduced.

With both $\mu$- and $\tilde\mu$-bindings in \targetlang\ we have to face a potential problem, namely statements of the form $\cut{\mu\alpha.s_1}{\tilde\mu x.s_2}$.
Such a statement is called a \emph{critical pair} since it can potentially be reduced to both $s_1[\tilde\mu x.s_2/\alpha]$ and $s_2[\mu \alpha.s_1/x]$ which can be a source of non-confluence.
A closer inspection of the rules shows that we avoid this pitfall and always evaluate the statement to $s_1[\tilde\mu x.s_2/\alpha]$.
We do not allow to reduce the statement to $s_2[\mu \alpha.s_1/x]$ since only values $\valueof{p}$ can be substituted for variables, and $\mu \alpha.s_1$ is not a value.
This restriction precisely mirrors the restriction on the evaluation of let-bindings in \surfacelang.
In other words, we use call-by-value evaluation order.
We will address the critical pair and how it relates to different evaluation orders again in \cref{subsec:insights:strict-vs-lazy}.

\begin{example}
  \label{ex:let-syntax}
  Consider the term $\letin{x}{\natlit{2}*\natlit{2}}{x*x}$ whose translation into \targetlang\ is the producer $\mu\alpha.\underline{\cut{\mu\beta.*(\natlit{2},\natlit{2};\beta)}{\tilde\mu x.\cut{\mu\gamma.*(x,x;\gamma)}{\alpha}}}$.
  This producer contains a critical pair which we have underlined.
  Because we are using call-by-value, we can observe how the following reduction steps resolve the critical pair by evaluating the  $\mu$-abstraction first.
  \begin{align*}
    \cut{\mu\beta.*(\natlit{2},\natlit{2};\beta)}{\tilde\mu x.\cut{\mu\gamma.*(x,x;\gamma)}{\vartop}}
    \reducesto *(\natlit{2},\natlit{2};\tilde\mu x.\cut{\mu\gamma.*(x,x;\gamma)}{\vartop})\reducesto\\
    \cut{\natlit{4}}{\tilde\mu x. \cut{\mu\gamma.*(x,x;\gamma)}{\vartop}}
    \reducesto \cut{\mu\gamma.*(\natlit{4},\natlit{4};\gamma)}{\vartop}
    \reducesto *(\natlit{4},\natlit{4};\vartop)
    \reducesto \cut{\natlit{16}}{\vartop}
  \end{align*}
  We can observe that the arithmetic expression $2 * 2$ has been evaluated only once, which is precisely what we expect from call-by-value.
\end{example}

\subsection{Top-level Definitions}
\label{subsec:syntax-recursion}

We introduce recursive top-level definitions to \surfacelang\ and \targetlang\ for two reasons.
They allow us to write more interesting examples and they illustrate a difference in how recursive calls are handled.
The extension is specified in \cref{def:syntax:toplevel}.
\medskip
\begin{frameddefinition}[Top-level Definitions]
  \label{def:syntax:toplevel}
  We assume for both languages that $f,g,h,\ldots \in \textsc{Names}$.\hfill

  \begin{minipage}{0.45\textwidth}
    \[
      \begin{array}{l c l r}
        \multicolumn{4}{c}{\surfacelang} \\[0.2cm]
        F & \Coloneqq & \mathbf{def}\ f(\overline{x};\overline{\alpha}) \coloneq t & \emph{Definitions}\\
        P & \Coloneqq & \emptyset\ \mid\ F,P                               & \emph{Programs}\\
        t & \Coloneqq & \ldots \mid f(\overline{t};\overline{\alpha})      & \emph{Terms}\\
      \end{array}
    \]
    \begin{equation*}
      f(\overline{\valueof{t}};\overline{\alpha})\reducesto t[\overline{\valueof{t}}/\overline{x},\overline{\alpha}/\overline{\beta}] \quad (\text{if } f(\overline{x};\overline{\beta}) \coloneq t \in P)
    \end{equation*}
  \end{minipage}
  \hfill\vline\hfill
  \begin{minipage}{0.45\textwidth}
    \[
      \begin{array}{l c l r}
        \multicolumn{4}{c}{\targetlang} \\[0.2cm]
        F & \Coloneqq & \mathbf{def}\ f(\overline{x};\overline{\alpha}) \coloneq s & \emph{Definitions}\\
        P & \Coloneqq & \emptyset\ \mid\ F,P & \emph{Programs}\\
        s & \Coloneqq & \ldots\ \mid\ f(\overline{p};\overline{c}) & \emph{Statements}\\
      \end{array}
    \]
    \begin{equation*}
      f(\overline{\valueof{p}};\overline{\valueof{c}}) \reducesto s[\overline{\valueof{p}}/\overline{x},\overline{\valueof{c}}/\overline{\alpha}] \quad (\text{if } f(\overline{x};\overline{\alpha}) \coloneq s\in P)
    \end{equation*}
  \end{minipage}
  \medskip

  \[
    \begin{array}{ccc}
      \translate{\mathbf{def}\ f(\overline{x};\overline{\alpha}) \coloneq t} \coloneq \mathbf{def}\ f(\overline{x};\overline{\alpha},\alpha) \coloneq \cut{\translate{t}}{\alpha} & \fresh{\alpha}  \\
      \translate{f(\overline{t};\overline{\alpha})} \coloneq  \mu\alpha.f(\overline{\translate{t}};\overline{\alpha},\alpha) & \fresh{\alpha}
    \end{array}
  \]

\end{frameddefinition}
\medskip

Top-level definitions should not be confused with first-class functions which will be introduced later, since they cannot be passed as an argument or returned as a result.
They are a part of a program that consists of a list of such top-level definitions.
The top-level definitions in \surfacelang\ curiously also take covariables as arguments even though the language does not contain consumers; you can ignore that for now.
If you remember the example from the introduction, then you might recall that we use them for passing labels, but we will only formally introduce that construct in \cref{subsec:syntax-exception-handling}.

We evaluate the call of a top-level definition by looking up the body in the program and substituting the arguments of the call for the parameters in the body of the definition.
The body of a top-level definition is a term in \surfacelang\ and a statement in \targetlang.
This difference explains why we have to add an additional parameter $\alpha$ to every top-level definition when we translate it; this parameter $\alpha$ also corresponds to the additional continuation argument when we ordinarily translate a function into continuation-passing style.
We could also have specified that the body of a top-level definition in \targetlang\ should be a producer.
We don't do that because when we eventually translate \targetlang\ to machine code we want every top-level definition to become the target of a jump with arguments \emph{without building up a function call stack}.
The following example shows how this works:

\begin{example}
  \label{ex:recursive-syntax}
  Using a top-level definition, we can represent the factorial function in \targetlang.
  \begin{align*}
    \mathbf{def}\ \text{fac}(n;\alpha) \coloneq \ifzero{n}{\cut{\natlit{1}}{\alpha}}{-(n,\natlit{1};\tilde\mu x.\text{fac}(x;\tilde\mu r.*(n,r;\alpha)))}
  \end{align*}
  For the argument $\natlit{1}$ this evaluates in the following way:
  \begin{align*}
    \text{fac}(\natlit{1},\vartop)
    &\reducesto \ifzero{\natlit{1}}{\cut{\natlit{1}}{\vartop}}{-(\natlit{1},\natlit{1};\tilde\mu x.\text{fac}(x;\tilde\mu r.*(\natlit{1},r;\vartop)))} \\
    &\reducesto -(\natlit{1},\natlit{1};\tilde\mu x.\text{fac}(x;\tilde\mu r.*(\natlit{1},r;\vartop))) \\
    &\reducesto \cut{\natlit{0}}{\tilde\mu x.\text{fac}(x;\tilde\mu r.*(\natlit{1},r;\vartop))} \\
    \tag{$\ast$}
    &\reducesto \text{fac}(\natlit{0};\tilde\mu r.*(\natlit{1},r;\vartop)) \\
    &\reducesto \ifzero{\natlit{0}}{\cut{\natlit{1}}{\tilde\mu r.*(\natlit{1},r;\vartop)}}{\ldots} \\
    &\reducesto \cut{\natlit{1}}{\tilde\mu r.*(\natlit{1},r;\vartop)} \\
    &\reducesto *(\natlit{1},\natlit{1};\vartop) \reducesto \cut{\natlit{1}}{\vartop}
  \end{align*}
  At the point ($\ast$) of the evaluation we can now see how the recursive call is evaluated.
  In \surfacelang\ this recursive call would have the form $1 *\text{fac}(0)$ and require a function stack, but in \targetlang\ we can jump to the definition of fac with the consumer $\tilde\mu r.*(\natlit{1},r;\vartop)$ as an additional argument which contains the information that the result of the recursive call should be bound to the variable $r$ and then multiplied with $\natlit{1}$.
  Note again that this consumer argument corresponds to a continuation in continuation-passing style (in that sense it might be viewed as a reified stack) and so the basic techniques used in CPS-based intermediate representations and compilers can be applied for its implementation.
\end{example}

\subsection{Algebraic Data and Codata Types}
\label{subsec:syntax-data-codata}

We now extend \surfacelang\ and \targetlang\ with two new features: algebraic data and codata types.
Algebraic data types are familiar from most typed functional programming languages.
Algebraic codata types \cite{Hagino1989codatatypes} are a little more unusual; they are defined by a set of observations or methods called destructors and are quite similar to interfaces in object-oriented programming \cite{Cook2009understanding}.
We introduce them both in the same section because they help to illustrate some of the deep theoretical dualities and symmetries of the sequent calculus and the \lambdamumu.

To get acquainted with our syntax, let us first briefly look at two short examples in \surfacelang.
The following definition calculates the sum over a $\mathtt{List}$ it receives as input.
\begin{align*}
        &\mathbf{def}\ \text{sum}(x)\coloneq \case{x}{\mathtt{Nil} \Rightarrow \natlit{0}, \mathtt{Cons}(y,ys) \Rightarrow y + \text{sum}(ys)}
\end{align*}
It does so by pattern matching using the $\case{...}{...}$ construct which is entirely standard.
As an example of codata types, consider this definition:
\begin{align*}
        &\mathbf{def}\ \text{repeat}(x) \coloneq \cocase{\mathtt{hd} \Rightarrow x, \mathtt{tl} \Rightarrow \text{repeat}(x)}
\end{align*}
It constructs an infinite $\mathtt{Stream}$ whose elements are all the same as the input $x$ of the function.
A $\mathtt{Stream}$ is defined by two destructors, $\mathtt{hd}$ yields the head of the stream and $\mathtt{tl}$ yields the remaining stream without the head.
The stream is constructed by copattern matching \cite{Abel2013copatterns} using the $\cocase{...}$ construct.

\medskip
\begin{frameddefinition}[Algebraic Data and Codata Types]
  \label{def:syntax:data-codata}\hfill\\
  \hspace*{-0.4cm}
  \begin{minipage}{0.45\textwidth}
    \[
      \begin{array}{rl}
        \multicolumn{2}{c}{\surfacelang} \\[0.2cm]
        t & \Coloneqq \ldots \mid K(\overline{t}) \mid \case{t}{\overline{K(\overline{x}) \Rightarrow t}} \\
        & \mid t.D(\overline{t}) \mid \cocase{\overline{D(\overline{x}) \Rightarrow t}} \\
        \valueof{t} & \Coloneqq \ldots \mid K(\overline{\valueof{t}}) \mid \cocase{\overline{D(\overline{x}) \Rightarrow t}}\\
        \phantom{\Coloneqq}
      \end{array}
    \]
    \begin{align*}
        \case{K(\overline{\valueof{t}})}{K(\overline{x})\Rightarrow t,\ldots} \reducesto t[\overline{\valueof{t}}/\overline{x}] \\
        \cocase{D(\overline{x}) \Rightarrow t,\ldots}.D(\overline{\valueof{t}}) \reducesto t[\overline{\valueof{t}}/\overline{x}]
    \end{align*}
  \end{minipage}
  \hspace*{0.05cm}
  \hfill\vline\hfill
  \begin{minipage}{0.53\textwidth}
    \[
      \begin{array}{rl}
        \multicolumn{2}{c}{\targetlang} \\[0.2cm]
          p &\Coloneqq \ldots \mid K(\overline{p};\overline{c}) \mid \SCcocase{\overline{D(\overline{x};\overline{\alpha})\Rightarrow s}} \\
          c &\Coloneqq \ldots \mid D(\overline{p};\overline{c}) \mid \SCcase{\overline{K(\overline{x};\overline{\alpha})\Rightarrow s}} \\
          \valueof{p} &\Coloneqq \ldots \mid K(\overline{\valueof{p}};\overline{c}) \mid \SCcocase{\overline{D(\overline{x};\overline{\alpha})\Rightarrow s}}\\
          \valueof{c} & \Coloneqq \ldots \mid D(\overline{p};\overline{c}) \mid \SCcase{\overline{K(\overline{x};\overline{\alpha})\Rightarrow s}}
      \end{array}
    \]
    \begin{align*}
      \cut{K(\overline{\valueof{p}};\overline{\valueof{c}})}{\SCcase{K(\overline{x};\overline{\alpha})\Rightarrow s,\ldots}} \reducesto s[\overline{\valueof{p}}/\overline{x};\overline{\valueof{c}}/\overline{\alpha}] \\
      \cut{\SCcocase{D(\overline{x};\overline{\alpha}) \Rightarrow s,\ldots}}{D(\overline{\valueof{p}};\overline{\valueof{c}})} \reducesto s[\overline{\valueof{p}}/\overline{x};\overline{\valueof{c}}/\overline{\alpha}]
    \end{align*}
  \end{minipage}

  \begin{align*}
    \translate{K(t_1,\ldots,t_n)} & \coloneq K(\translate{t_1},\ldots,\translate{t_n}) \\
    \translate{\case{t}{ \overline{K_i(\overline{x_{i,j}}) \Rightarrow t_i} }} & \coloneq \mu \alpha.\cut{\translate{t}}{\SCcase{\overline{K_i(\overline{x_{i,j}}) \Rightarrow \cut{\translate{t_i}}{\alpha}}}} & \fresh{\alpha}\\
    \translate{t.D(t_1,\ldots,t_n)} &\coloneq \mu \alpha. \cut{\translate{t}}{D(\translate{t_1},\ldots,\translate{t_n};\alpha)} & \fresh{\alpha}\\
    \translate{\cocase{\overline{D_i(\overline{x_{i,j}}) \Rightarrow t_i}}} &\coloneq \SCcocase{\overline{D_i(\overline{x_{i,j}};\alpha_i) \Rightarrow \cut{\translate{t_i}}{\alpha_i}}} & \fresh{\overline{\alpha_i}}
  \end{align*}
\end{frameddefinition}
\medskip

The general syntax is given in \cref{def:syntax:data-codata}.
We assume fixed sets of constructors $K$ containing at least $\mathtt{Nil}$, $\mathtt{Cons}$ and $\mathtt{Tup}$ and destructors $D$ containing at least $\mathtt{hd}$, $\mathtt{tl}$, $\mathtt{fst}$ and $\mathtt{snd}$.
In \surfacelang\ we use constructors $K$ to define both terms $K(\overline{t})$ and case expressions $\case{t}{\overline{K(\overline{x})\Rightarrow t}}$.
Destructors $D$ of codata types are used in destructor terms $t.D(\overline{t})$ and cocase expressions $\cocase{\overline{D(\overline{x})\Rightarrow t}}$.
The term $t$ in $\case{t}{\overline{K(\overline{x})\Rightarrow t}}$ and $t.D(\overline{t})$ is called the \emph{scrutinee} in both cases.

\subsubsection{Data Types}
Let us consider another example to better understand the general syntax:
\begin{align*}
        &\mathbf{def}\ \text{swap}(x)\coloneq \case{x}{\mathtt{Tup}(y,z) \Rightarrow \mathtt{Tup}(z,y)}
\end{align*}
The function $\text{swap}$ takes a $\mathtt{Pair}$ and swaps its elements.
To do so, it pattern matches on its input using the $\case{t}{\overline{K(\overline{x}) \Rightarrow t}}$ construct, and constructs a tuple using a constructor $K(\overline{t})$, where $K$ is specialized to $\mathtt{Tup}$.
Our syntax is quite general, so it is easy to extend it with new constructors; any such extension only requires that we also add corresponding typing rules (\cref{sec:typing}).

In \targetlang, algebraic data types are mostly handled in the same way as in \surfacelang.
The main difference is that the scrutinee is no longer a part of a case expression.
Instead, the case expression is a consumer and the scrutinee is a producer, which are then combined in a statement.
This is exactly what is done in the translation.
When a case and a constructor meet, there is an opportunity for computation, \emph{consuming} the constructed term and continuing with the corresponding right-hand side of the case expression.
This also explains our terminology of \emph{producers} and \emph{consumers}.
Constructors create, or in other words, \emph{produce} data structures while cases destroy, or \emph{consume} them.

There is another difference, however.
Constructors in \targetlang\ can now also take consumers as arguments which is not the case in \surfacelang.
An example of this is the negation type of a type $\tau$ which can be formulated as a data type with one constructor taking a consumer of type $\tau$ as an argument.
A program making use of this type can be found in section 7.2 of \citet{icfp2022}.

\surfacelang\ is a call-by-value language which manifests itself in that a value of an algebraic data type consists of a constructor applied to other values.
A case expression $\case{t}{\ldots}$ can only be evaluated if the scrutinee $t$ is a value, so this means that it must be a constructor whose arguments are all values in the evaluation rule.

Evaluation in \targetlang\ is done the same way, only with the scrutinee term changed to be the producer of a cut.
Note that all consumers in \targetlang\ are covalues (which is why the arguments of destructors in the definition of covalues are not in Fraktur font), so in order for a constructor term to be a value, only its producer arguments need to be values.
This also means that the requirement for the consumer arguments of the constructor to be covalues is vacuously satisfied in the evaluation rule in \targetlang.

\begin{example}
  \label{ex:data-syntax}
  The translation of $\text{swap}$ (including a simplification) is given by
  \begin{align*}
    \mathbf{def}\ \text{swap}(x;\alpha) \coloneq \cut{x}{\SCcase{\mathtt{Tup}(y,z)\Rightarrow \cut{\mathtt{Tup}(z,y)}{\alpha}}}
  \end{align*}
  Evaluating with an argument $\mathtt{Tup}(\natlit{2},\natlit{3})$ and $\vartop$ then proceeds as we would expect
  \begin{align*}
    \cut{\mathtt{Tup}(\natlit{2},\natlit{3})}{\SCcase{\mathtt{Tup}(y,z)\Rightarrow \cut{\mathtt{Tup}(z,y)}{\vartop}}}
    \reducesto \cut{\mathtt{Tup}(\natlit{3},\natlit{2})}{\vartop}
  \end{align*}
\end{example}

\subsubsection{Codata Types}
To illustrate the syntax for codata types further, consider the definition
\begin{align*}
        &\mathbf{def}\ \text{swap_lazy}(x) \coloneq \cocase{\mathtt{fst} \Rightarrow x.\mathtt{snd}, \mathtt{snd} \Rightarrow x.\mathtt{fst} }
\end{align*}
$\text{swap_lazy}$ takes a lazy pair ($\mathtt{LPair}$), which is defined by its projections $\mathtt{fst}$ and $\mathtt{snd}$, and swaps its elements.
It does so with a copattern match $\cocase{\overline{D_i(\overline{x})\Rightarrow t_i}}$ which invokes the opposite destructor on the original pair in each branch.
With a destructor invocation $t.D(\overline{t})$, where $D$ is specialized to $\mathtt{fst}$ or $\mathtt{snd}$, we can then obtain the corresponding component of the new pair.

For codata in \surfacelang, the scrutinee is located in the destructor term instead of the cocase, inverse to data types.
So now destructors are the consumers and cocases are the producers.
This is mirrored in the translation which again separates the scrutinee, since in \targetlang\ codata types and copattern matching are perfectly dual to data types and pattern matching.

All the destructors we have used here do not have producer parameters, but this is just due to the selection of examples.
In the next section, we will see an example of a destructor with a producer parameter.
Moreover, during the translation each destructor is endowed with an additional consumer parameter which again determines how execution continues after the destructor was invoked (and is thus bound by a surrounding $\mu$).
For constructors this is not necessary, as we can use the same consumer variable directly in each branch of a $\mathbf{case}$ (similar to $\mathbf{ifz}$\footnote{We could have modelled $\mathbf{ifz}$ as a $\mathbf{case
}$, too, by modeling numbers as a data type. But since $\mathbf{ifz}$ corresponds to a machine instruction quite directly, it is natural to make it a statement, as explained in \cref{subsec:syntax-arith-expressions}.}) because the scrutinee and the $\mathbf{case}$ are in the same expression.
Destructors (and also constructors) in \targetlang\ can even have more than one consumer parameter.
An example of this is given in \cref{subsec:related-work:linear-logic}.
%

Evaluation is done analogous to data types, with the roles of cases and constructors reversed for cocases and destructors.
Note, however, that for evaluation in \targetlang\ the producer arguments of the destructor also have to be values, so it is not sufficient for the destructor to be a covalue (which it always is).
We will come back to this subtlety in \cref{sec:focusing}.

\begin{example}
  \label{ex:codata-syntax}
  Translating $\text{swap_lazy}$ is done analogously to $\text{swap}$.
  \begin{align*}
    \mathbf{def}\ \text{swap_lazy}(x;\alpha) \coloneq \cut{\SCcocase{\mathtt{fst}(\beta)\Rightarrow \cut{x}{\mathtt{snd}(\beta)}, \mathtt{snd}(\beta)\Rightarrow \cut{x}{\mathtt{fst}(\beta)}}}{\alpha}
  \end{align*}
  Now take $p = \SCcocase{\mathtt{fst}(\alpha) \Rightarrow \cut{\natlit{1}}{\alpha}, \mathtt{snd}(\alpha)\Rightarrow *(\natlit{2},\natlit{3};\alpha)}$ and evaluate $\text{swap_lazy}$ with $\mathtt{snd}$ to retrieve its first element:
  \begin{align*}
    \text{swap_lazy}(p;\mathtt{snd}(\vartop))\ \reducesto \ &
    \cut{\SCcocase{\mathtt{fst}(\beta)\Rightarrow \cut{p}{\mathtt{snd}(\beta)}, \mathtt{snd}(\beta)\Rightarrow \cut{p}{\mathtt{fst}(\beta)}}}{\mathtt{snd}(\vartop)} \\
    \reducesto \ \cut{p}{\mathtt{fst}(\vartop)}\ \reducesto \ & \cut{\natlit{1}}{\vartop}
  \end{align*}
  Because $\mathbf{cocase}$s are values regardless of their right-hand sides (in contrast to constructors), we can apply the destructor $\mathtt{snd}$ without first evaluating the product $*(\natlit{2},\natlit{3};\alpha)$.
  For pairs, we could not do this, as $\mathtt{Tup}(\natlit{1},*(\natlit{2},\natlit{3};\alpha))$ is not a value, so its arguments have to be evaluated first.
  This is why this codata type is called \emph{lazy pair}, as it allows to not evaluate its contents in contrast to regular pairs.
  \end{example}

This section showed an important property of \targetlang\ which does not hold for \surfacelang.
The data and codata types of \targetlang\ are completely symmetric: the syntax for cases is the same as the syntax for cocases and the same is true for constructors and destructors.
The reason for this deep symmetry is the same reason that makes the sequent calculus more symmetric than natural deduction, but in \cref{def:syntax:data-codata} we can observe it in a programming language.

\subsection{First-Class Functions}
\label{subsec:syntax-functions}
A core feature that we have omitted until now are first-class functions which are characterized by lambda abstractions $\lambda x.t$ and function applications $t_1\ t_2$.
But first-class functions do not add any expressive power to a language with codata types, since codata types are a more general concept which subsumes functions as a special case.
We could therefore implement lambda abstractions and function applications as syntactic sugar in both \surfacelang\ and \targetlang.
This is incidentally also what the developers of Java did when they introduced lambdas to the language \cite{Goetz2014jsr}.
We introduce lambda abstractions and function application to the syntax of \surfacelang\ and desugar them to cocases and destructors of a codata type with an $\mathtt{ap}$ destructor during the translation to \targetlang.
\medskip

\begin{frameddefinition}[First-Class Functions]
  \quad \\
  \begin{minipage}{0.45\textwidth}
    \[
      \begin{array}{lcl}
        \multicolumn{3}{c}{\surfacelang} \\[0.2cm]
        t & \Coloneqq & \ldots \mid \lambda x.t \mid t\ t  \\
        \valueof{t} & \Coloneqq & \ldots \mid \lambda x.t \\[0.2cm]
        \multicolumn{3}{c}{(\lambda x.t)\ \valueof{t} \reducesto t[\valueof{t}/x]}
      \end{array}
    \]
  \end{minipage}
  \hfill\vline\hfill
  \begin{minipage}{0.45\textwidth}
    \[
      \begin{array}{c}
        \targetlang \\[0.2cm]
        D \in \{ \ldots, \mathtt{ap} \}
        \phantom{v}\\ [0.2cm]
        \phantom{\lambda}\\
      \end{array}
    \]
  \end{minipage}

  \[
    \begin{array}{cc}
      \translate{\lambda x.t} \coloneq \SCcocase{ \mathtt{ap}(x;\alpha) \Rightarrow \cut{\translate{t}}{\alpha} } & \fresh{\alpha} \\
      \translate{t_1\ t_2} \coloneq \mu \alpha. \cut{\translate{t_1}}{\mathtt{ap}(\translate{t_2};\alpha)} & \fresh{\alpha}
    \end{array}
  \]
\end{frameddefinition}
\medskip
\begin{example}
  Consider the term $(\lambda x. x*x)\ \natlit{2}$ in \surfacelang.
  We can translate this term and evaluate it in \targetlang\ as follows:
  \begin{align*}
    \cut{\SCcocase{\mathtt{ap}(x,\beta)\Rightarrow \cut{\mu\gamma.*(x,x;\gamma)}{\beta}}}{\mathtt{ap}(\natlit{2};\vartop)}
    \reducesto \cut{\mu\gamma.*(\natlit{2},\natlit{2};\gamma)}{\vartop}
    \reducesto^{\ast} \cut{\natlit{4}}{\vartop}
  \end{align*}
\end{example}

\subsection{Control Operators}
\label{subsec:syntax-exception-handling}

Finally, we add the feature that we used in the motivating example in the introduction: labels and jumps.
We have to extend \surfacelang\ with $\mathbf{label}$ and $\mathbf{goto}$ constructs but since we can translate them locally to $\mu$-bindings we don't have to add anything to \targetlang.
\medskip

\begin{frameddefinition}[Control Operators]
  \[
    \begin{array}{lcl}
      t & \Coloneqq & \ldots \mid \labelterm{\alpha}{t} \mid \jump{t}{\alpha}
    \end{array}
  \]
  \hfill
  \begin{equation*}
    \translate{\labelterm{\alpha}{t}} \coloneq \mu \alpha.\cut{\translate{t}}{\alpha} \qquad
    \translate{\jump{t}{\alpha}} \coloneq \mu \beta.\cut{\translate{t}}{\alpha} \quad \fresh{\beta}
  \end{equation*}
\end{frameddefinition}
\medskip

A term $\labelterm{\alpha}{t}$ binds a covariable $\alpha$ in the term $t$ and thereby provides a location to which a $\mathbf{goto}$ used within $t$ can jump.
Such a $\jump{t}{\alpha}$ takes the location $\alpha$ as an argument, as well as the term $t$ that should be used to continue the computation at the location where $\alpha$ was bound.
It is a bit tricky to write down precisely how the evaluation of $\mathbf{label}$ and $\mathbf{goto}$ works, but the following two rules are a good approximation, where we assume that $\alpha$ does not occur free in $\valueof{t}$:
\begin{equation*}
  \labelterm{\alpha}{\valueof{t}} \reducesto \valueof{t} \qquad\qquad
  \labelterm{\alpha}{\dots \jump{\valueof{t}}{\alpha} \dots} \reducesto \valueof{t}
\end{equation*}
The left rule says that when the labeled term $t$ can be evaluated to a value $\valueof{t}$ without ever using a $\mathbf{goto}$, then we can discard the surrounding $\mathbf{label}$.
The rule on the right says that if we do have a $\mathbf{goto}$ which jumps to the $\mathbf{label}$ $\alpha$ with a value $\valueof{t}$, then we discard everything between the $\mathbf{label}$ and the $\mathbf{goto}$ and continue the computation with this value $\valueof{t}$.
In order to make this second rule precise, we have to make explicit what we only indicate with the ellipses separating the label from the jump; we will do so in \cref{sec:focusing}.

\begin{example}
  \label{ex:control}
  In the introduction, we used the example of a fast multiplication function which multiplies all the elements of a list and short-circuits the computation if it encounters a zero.
  As we have allowed top-level definitions to pass covariables as arguments, we can now write the example of the introduction.
  \begin{align*}
    &\mathbf{def}\ \text{mult}(l) \coloneq \labelterm{\alpha}{\text{mult'}(l;\alpha)} \\
    &\mathbf{def}\ \text{mult'}(l;\alpha) \coloneq \case{l}{ \text{Nil} \Rightarrow 1, \text{Cons}(x,xs) \Rightarrow \ifzero{x}{\jump{0}{\alpha}}{x * \text{mult'}(xs; \alpha)}}
  \end{align*}
  When we translate to \targetlang\ and simplify the resulting term, we get the result:
  \begin{align*}
    &\mathbf{def}\ \text{mult}(l;\alpha) \coloneq \text{mult'}(l;\alpha,\alpha) \\
    &\mathbf{def}\ \text{mult'}(l;\alpha, \beta) \coloneq \\
    & \quad \cut{l}{\SCcase{\text{Nil} \Rightarrow \cut{1}{\beta}, \text{Cons}(x,xs) \Rightarrow \ifzero{x}{\cut{0}{\alpha}}{*(x,\mu \gamma.\text{mult'}(xs;\alpha,\gamma);\beta)}}}
  \end{align*}
  This is almost the result we have seen in the introduction.
  The only difference is that the recursive call to $\text{mult'}$ is nested inside the multiplication.
  This is the same problem we have seen with nested arithmetic operations at the end of \cref{subsec:syntax-arith-expressions} and we will address it in the next section.
\end{example}

The $\mathbf{label}/\mathbf{goto}$ control operator we have introduced in this subsection is of course named after the goto instructions and labels which can be found in many imperative programming languages.
Our adaption to the context of functional programming languages is similar to classical control operators (see \cref{subsec:insights:let-vs-control} for a more precise discussion) such as J \cite{Landin1965} or $\mathtt{let/cc}$ (also known as $\mathbf{escape}$) \cite{Reynolds1972definitional}; the programming language Scala also provides the closely related $\mathtt{boundary}/\mathtt{break}$\footnote{See~\href{https://www.scala-lang.org/api/3.3.0/scala/util/boundary$.html}{scala-lang.org/api/3.3.0/scala/util/boundary\$.html}.} where a boundary marks a block of code to which the programmer can jump with a break instruction.
One central property of this control effect is that it is lexically scoped, since the label names $\alpha$ are passed around lexically and can be shadowed.
This distinguishes them from dynamically scoped control operators like the exception mechanisms found in many programming languages like Java or C++.
(A dynamically scoped variant of our control operator would omit the label names, and the jump in $\mathbf{label}\ \{ \ldots \mathbf{goto}(t) \ldots\}$ would return to the nearest enclosing label at runtime.)
We follow the more recent reappraisal of lexically scoped control effects, for example by \citet{Zhang2016} in the case of exceptions or by \citet{Brachthaeuser2020capabilities} in the case of effect handlers and delimited continuations.

\section{Evaluation Within a Context}
\label{sec:focusing}

At the end of \cref{subsec:syntax-arith-expressions} we ran into the problem that we cannot yet fully evaluate the term $(\natlit{2} * \natlit{4}) + \natlit{5}$ in \surfacelang\ or its translation $\mu \alpha.+(\mu \beta.*(\natlit{2},\natlit{4};\beta),\natlit{5};\alpha)$ in \targetlang\ with the rules that are available to us: we are stuck.
In this section, we finally address this problem.
We are going to show how we can evaluate subexpressions of \surfacelang\ in \cref{subsec:focusing:fun}, but since we are ultimately more interested in compiling programs into \targetlang\ to optimize and reduce those programs, we are spending more time on the problem for \targetlang\ in \cref{subsec:focusing:core}.

\subsection{Evaluation Contexts for Fun}
\label{subsec:focusing:fun}

The problem with evaluating the term $(\natlit{2} * \natlit{4}) + \natlit{5}$ is that the available rules only allow to reduce direct redexes and not redexes that are nested somewhere within a term.
Evaluation contexts solve this problem by specifying the locations within a term which are in evaluation position.
In our example, the term $(\natlit{2} * \natlit{4}) + \natlit{5}$ can be factored into the evaluation context $\square + \natlit{5}$ and the redex $\natlit{2} * \natlit{4}$.
We can then use the old rules to reduce this redex to $\natlit{8}$ and then plug this result back into the evaluation context, which yields the new term $\natlit{8} + \natlit{5}$.
The syntax of evaluation contexts is given in \cref{def:focusing:evaluationcontexts}.
\medskip

\begin{frameddefinition}[Evaluation Contexts]
  \label{def:focusing:evaluationcontexts}
  Evaluation contexts $E$ are defined as:\\
  \[
    \begin{array}{rcl}
      E & \Coloneqq & \square \mid E\odot t \mid \valueof{t} \odot E \mid \ifzero{E}{t}{t} \mid \letin{x}{E}{t} \mid f(\overline{\valueof{t}},E,\overline{t})\mid K(\overline{\valueof{t}},E,\overline{t})\\
        & \mid & \case{E}{\overline{K(\overline{x}) \Rightarrow t}}\mid E\ t \mid \valueof{t}\ E \mid E.D(\overline{t}) \mid \valueof{t}.D(\overline{\valueof{t}},E,\overline{t}) \mid \labelterm{\alpha}{E} \mid \jump{E}{\alpha}
    \end{array}
  \]
\end{frameddefinition}
\medskip

These evaluation contexts also allow us to specify formally the second approximate evaluation rule of the $\mathbf{label}$ and $\mathbf{goto}$ constructs from \cref{subsec:syntax-exception-handling}:
\begin{equation*}
        E[\labelterm{\alpha}{E^{\prime}[\jump{\valueof{t}}{\alpha}]}] \reducesto E[\valueof{t}]
\end{equation*}
Here we again assume that $\alpha$ does not occur free in $\valueof{t}$ and moreover that the inner evaluation context $E^{\prime}$ does not contain another $\mathbf{label}$ construct.
For the full operational semantics of $\mathbf{label}/\mathbf{goto}$ we also need to handle the cases where $\alpha$ can occur free in $\valueof{t}$ and where $E^{\prime}$ can contain other $\mathbf{label}$s.
Otherwise, we could get stuck during evaluation even for closed and well-typed terms, i.e., the progress theorem (see \cref{teo:progress-surface} in \cref{subsec:typing:theorems}) would not hold.
As the full semantics is in essence that of other classical control operators (i.p., $\mathtt{let/cc}$; also see the discussion in \cref{subsec:insights:let-vs-control}) and requires some more formalism, we do not give it here and instead refer the interested reader to the brief discussion in \cref{sec:appendix:semantics-label}.

With evaluation contexts, we finally have a working and precise operational semantics for \surfacelang\ (apart from the approximate rules for $\mathbf{label}$ and $\mathbf{goto}$) which we can use to reason about programs.
Unfortunately, it is wildly inefficient to implement an evaluator which uses evaluation contexts in the way described above.
The reason for this inefficiency is that we very elegantly specified how a term can be factored into an evaluation context and a redex, but the evaluator which implements this behavior has to search for the next redex after every single evaluation step.
We will see in the next section that we have a better solution once our programs are compiled into \targetlang.

\subsection{Focusing on Evaluation in Core}
\label{subsec:focusing:core}

Let us now come back to the problem in \targetlang\ and find a solution for the stuck term $\mu \alpha.+(\mu \beta.*(\natlit{2},\natlit{4};\beta),\natlit{5};\alpha)$.
We know that we have to evaluate $\mu \beta.*(\natlit{2},\natlit{4};\beta)$ next and then somehow plug the intermediate result into the hole $[\cdot]$ in the producer $\mu \alpha.+([\cdot],\natlit{5};\alpha)$.
If we give the intermediate result the name $x$ and play around with cuts, $\mu$-bindings and $\tilde\mu$ bindings, we might discover that we can recombine all these parts in the following way:
\begin{equation*}
  \mu \alpha.\cut{\mu \beta.*(\natlit{2},\natlit{4};\beta)}{\tilde\mu x.+(x,\natlit{5};\alpha)}
\end{equation*}
This term looks a bit mysterious, but the transformation corresponds roughly to what happens when we translate the term $\letin{x}{2*4}{x + 5}$ instead of $(2 * 4) + 5$ into \targetlang.
That is, we have lifted a subcomputation to the outside of the term we are evaluating.
This kind of transformation is called \emph{focusing} \cite{Andreoli1992logicprogramming, Curien2010} and we use it to solve the problem with stuck terms in \targetlang.
We can see that it worked in our example because the term now fully evaluates to its normal form.

\begin{example}
  The producer $\mu \alpha.\cut{\mu \beta.*(\natlit{2},\natlit{4};\beta)}{\tilde\mu x.+(x,\natlit{5};\alpha)}$ reduces as follows:
  \begin{align*}
    \cut{\mu \beta.*(\natlit{2},\natlit{4};\beta)}{\tilde\mu x.+(x,\natlit{5};\vartop)}
    &\reducesto *(\natlit{2},\natlit{4};\tilde\mu x.+(x,\natlit{5};\vartop)) \\
    &\reducesto \cut{\natlit{8}}{\tilde\mu x.+(x,\natlit{5};\vartop)} \\
    &\reducesto +(\natlit{8},\natlit{5};\vartop) \reducesto \cut{\natlit{13}}{\vartop}
  \end{align*}
\end{example}

Once we have settled on focusing, we have another choice to make: Do we want to use this trick during the evaluation of a statement or as a preprocessing step before we start with the evaluation?
These two alternatives are called dynamic and static focusing.

\begin{description}[leftmargin=0.6cm]
    \item[Dynamic Focusing] With dynamic focusing \cite{Wadler2003call} we add additional evaluation rules, usually called $\varsigma$-rules, to lift sub-computations to the outside of the statement we are evaluating.
    \item[Static Focusing] For static focusing \cite{Curien2000duality} we perform a transformation on the code before we start evaluating it.
        This results in a focused normal form which is a subset of the syntax of \targetlang\ that we have described so far.
\end{description}

Dynamic focusing is great for reasoning about the meaning of programs, but static focusing is more efficient if we are interested in compiling and running programs.
For this reason, we only consider static focusing in what follows.

\medskip
\begin{frameddefinition}[Static Focusing]
        \label{def:static-focusing}
        Static focusing is done using the following rules:
        \[
        \begin{array}{r c l}
                \multicolumn{3}{c}{\emph{Producers}}\\[0.1cm]
                \focus{\natlit{n}} & \coloneq & \natlit{n}\\
                \focus{x} & \coloneq & x\\
                \focus{\mu\alpha. s} & \coloneq & \mu\alpha.\focus{s}\\
                \focus{K(\overline{\valueof{p}},p,\overline{p};\overline{c})} & \coloneq &
                        \mu\alpha.\cut{\focus{p}}{\tilde\mu x.\cut{\focus{K(\overline{\valueof{p}},x,\overline{p},\overline{c})}}{\alpha}} \quad \novalue{p}\\
                \mathcal{F}(K(\overline{\valueof{p}};\overline{c})) & \coloneq & K(\overline{\focus{\valueof{p}}};\overline{\focus{c}})\\
                \mathcal{F}(\SCcocase{\overline{D(\overline{x};\overline{\alpha})\Rightarrow s}}) & \coloneq &
                        \SCcocase{\overline{D(\overline{x};\overline{\alpha}) \Rightarrow \focus{s}}}\\[0.1cm]
                \multicolumn{3}{c}{\emph{Consumers}}\\[0.1cm]
                \focus{\alpha} & \coloneq & \alpha\\
                \focus{\tilde\mu x.s} & \coloneq & \tilde\mu x. \focus{s}\\
                \focus{\SCcase{\overline{K(\overline{x};\overline{\alpha})\Rightarrow s}}} & \coloneq &
                        \SCcase{\overline{K(\overline{x};\overline{\alpha})\Rightarrow \focus{s}}}\\
                \focus{D(\overline{\valueof{p}},p,\overline{p},\overline{c})} & \coloneq &
                        \tilde\mu y. \cut{\focus{p}}{\tilde\mu x. \cut{y}{\focus{D(\overline{\valueof{p}},x,\overline{p};\overline{c})}}} \quad \novalue{p}\\
                \focus{D(\overline{\valueof{p}};\overline{c})} & \coloneq & D(\overline{\focus{\valueof{p}}};\overline{\focus{c}})\\[0.1cm]
                \multicolumn{3}{c}{\emph{Statements}}\\[0.1cm]
                \focus{\cut{p}{c}} & \coloneq & \cut{\focus{p}}{\focus{c}}\\
                \focus{\odot(p_1,p_2,c)} & \coloneq & \cut{\focus{p_1}}{\tilde\mu x.\focus{\odot(x,p_2,c)}} \quad \novalue{p_1}\\
                \focus{\odot(\valueof{p},p,c)} & \coloneq & \cut{\focus{p}}{\tilde\mu x.\focus{\odot(\valueof{p},x,c)}} \quad \novalue{p}\\
                \focus{\odot(\valueof{p}_1,\valueof{p}_2,c)} & \coloneq & \odot(\mathcal{F}(\valueof{p}_1),\mathcal{F}(\valueof{p}_2),\focus{c})\\
                \focus{\ifzero{p}{s_1}{s_2}} & \coloneq & \cut{\focus{p}}{\tilde\mu x.\ifzero{x}{s_1}{s_2}} \quad \novalue{p}\\
                \focus{\ifzero{\valueof{p}}{s_1}{s_2}} & \coloneq & \ifzero{\focus{\valueof{p}}}{\focus{s_1}}{\focus{s_2}} \\
                \focus{\text{f}(\overline{\valueof{p}},p,\overline{p};\overline{c})} & \coloneq &
                        \cut{\focus{p}}{\tilde\mu x.\focus{f(\overline{\valueof{p}},x,\overline{p};\overline{c})}} \quad \novalue{p}\\
                \focus{\text{f}(\overline{\valueof{p}},\overline{c})} & \coloneq & \text{f}(\overline{\focus{\valueof{p}}},\overline{\mathcal{F}(c)})
        \end{array}
        \]
\end{frameddefinition}
\medskip

The complete rules for static focusing are presented in \cref{def:static-focusing}.
Most of these rules are only concerned with performing the focusing transformation on all subexpressions, but some of the clauses where something interesting happens are the clauses for binary operators:
\begin{align*}
  \focus{\odot(p_1,p_2,c)} & \coloneq \cut{\focus{p_1}}{\tilde\mu x.\focus{\odot(x,p_2,c)}} \quad \novalue{p_1}\\
  \focus{\odot(\valueof{p},p,c)} & \coloneq \cut{\focus{p}}{\tilde\mu x.\focus{\odot(\valueof{p},x,c)}} \quad \novalue{p}\\
  \focus{\odot(\valueof{p}_1,\valueof{p}_2,c)} & \coloneq \odot(\mathcal{F}(\valueof{p}_1),\mathcal{F}(\valueof{p}_2),\focus{c})
\end{align*}
The first two clauses look for the arguments of the binary operator $\odot$ which are not values and use the trick described above to lift them to the outside.
Focusing is invoked recursively until the binary operator is only applied to values and the third clause comes into play.
This third clause then applies the focusing transformation to all arguments of the binary operator.
The clauses for constructors, destructors, $\mathbf{ifz}$ and calls to top-level definitions work in precisely the same way as those for binary operators.
It is noteworthy that by focusing the producer arguments of destructors we guarantee that the evaluation rule for codata types can fire.
If we had not required the producer arguments to be values in that rule (but only that the destructor is a covalue), we could easily introduce an unfocused term again by substituting a non-value for a variable.

The focusing transformation described in \cref{def:static-focusing} is not ideal since it creates a lot of administrative redexes.
As an example, consider how the statement defining $\text{mult'}$ from \cref{ex:control} is focused:
\begin{align*}
  & \focus{\cut{l}{\SCcase{\text{Nil} \Rightarrow \cut{1}{\beta}, \text{Cons}(x,xs) \Rightarrow \ifzero{x}{\cut{0}{\alpha}}{*(x,\mu \gamma.\text{mult'}(xs;\alpha,\gamma);\beta)}}}} \\
  & = \cut{l}{\SCcase{\text{Nil} \Rightarrow \cut{1}{\beta}, \text{Cons}(x,xs) \Rightarrow \ifzero{x}{\cut{0}{\alpha}}{\cut{\mu \gamma.\text{mult'}(xs;\alpha,\gamma)}{\tilde\mu z.*(x,z;\beta)}}}}
\end{align*}
Focusing has introduced the administrative redex $\cut{\mu \gamma.\text{mult'}(xs;\alpha,\gamma)}{\tilde\mu z.*(x,z;\beta)}$ in the second statement of the $\mathbf{ifz}$.
After reducing this redex to $\text{mult'}(xs;\alpha,\tilde\mu z.*(x,z;\beta))$, we finally arrive at the result from the introduction.
In the implementation, we solve this problem by statically reducing administrative redexes in a simplification step, but it is also possible to come up with a more elaborate definition of focusing which does not create them in the first place.
Such an optimized focusing transformation is, however, much less transparent than the one we have described.

\section{Typing Rules}
\label{sec:typing}
In this section, we introduce the typing rules for \surfacelang\ in \cref{subsec:typing:fun} and for \targetlang\ in \cref{subsec:typing:core}.
In \cref{subsec:typing:theorems} we state type soundness for both languages and prove that the translation from \surfacelang\ to \targetlang\ preserves the typeability of programs.
We use the same constructors, destructors, types and typing contexts for both \surfacelang\ and \targetlang, which are summarized in \cref{def:typing:types-contexts}.
Note that we distinguish between producer and consumer variables in the typing contexts, which we indicate with the prd and cns annotations.

\medskip
\begin{frameddefinition}[Types and Typing Contexts]
  \[
    \begin{array}{l c l r}
      K      & \Coloneqq & \mathtt{Nil} \mid \mathtt{Cons} \mid \mathtt{Tup} & \emph{Constructors}\\
      D      & \Coloneqq & \mathtt{hd} \mid \mathtt{tl} \mid \mathtt{fst} \mid \mathtt{snd} \mid \mathtt{ap} & \emph{Destructors}\\
      \tau   & \Coloneqq & \tyint \mid \mathtt{List}(\tau) \mid \mathtt{Pair}(\tau,\tau) \mid \mathtt{Stream}(\tau)\ \mid\ \mathtt{LPair}(\tau,\tau) \mid \tau \to \tau & \emph{Types}\\
      \Gamma & \Coloneqq & \emptyset \mid \Gamma, x \prd \tau \mid \Gamma, \alpha \cnt \tau & \emph{Typing Contexts}
    \end{array}
  \]
  \label{def:typing:types-contexts}
\end{frameddefinition}
\medskip

We specialize the rules for data types to the concrete types $\mathtt{Pair}$ and $\mathtt{List}$, and the rules for codata types to $\mathtt{LPair}$, $\mathtt{Stream}$ and functions $\sigma \to \tau$.
A realistic programming language would use type declarations introduced by the programmer to typecheck data and codata types instead of using these special cases.
But the formalization of such a general mechanism for specifying data and codata types makes the typing rules less readable.
This kind of mechanism for specifying algebraic data and codata types in sequent-calculus-based languages can be found in \cite{Downen2015structural} or \cite[section~8]{Downen2020compiling}.
In all of the typing rules below we assume that we have a program environment which contains type declarations for all the definitions contained in the program, but don't explicitly thread this program environment through each of the typing rules.

\subsection{Typing Rules for Fun}
\label{subsec:typing:fun}

We don't discuss the typing rules for \surfacelang\ in detail since they are mostly standard.
Instead, we provide the full rules in \cref{sec:appendix:typing-surfacelang}.
The language \surfacelang\ only has one syntactic category, terms, so we only need one typing judgment $\Gamma \vdash t : \tau$.
This typing judgment says that in the context $\Gamma$ (which contains type assignments for both variables and covariables) the term $t$ has type $\tau$.
The only two interesting rules concern the control operators $\mathbf{label}$ and $\mathbf{goto}$:

\begin{minipage}{0.45\textwidth}
  \begin{prooftree}
    \AxiomC{$\Gamma,\alpha\cnt \tau \vdash t : \tau$}
    \RightLabel{\textsc{Label}}
    \UnaryInfC{$\Gamma \vdash \labelterm{\alpha}{t} : \tau$}
  \end{prooftree}
\end{minipage}
\hfill
\begin{minipage}{0.45\textwidth}
  \begin{prooftree}
    \AxiomC{$\Gamma \vdash t :\tau$}
    \AxiomC{$\alpha \cnt \tau \in \Gamma$}
    \RightLabel{\textsc{Goto}}
    \BinaryInfC{$\Gamma \vdash \jump{t}{\alpha} : \tau'$}
  \end{prooftree}
\end{minipage}
\smallskip

In the rule \textsc{Label} we add the covariable $\alpha\cnt\tau$ to the typing context which is used to typecheck the term $t$.
The labeled expression $\labelterm{\alpha}{t}$ can return in only one of two ways: either the term $t$ is evaluated to a value and returned, or a jump instruction is used to jump to the label $\alpha$.
For this reason, the term $t$ and the label $\alpha$ must have the same type $\tau$, which is also the type for the labeled expression itself.

In the rule \textsc{Goto} we require that the covariable $\alpha$ is in the context with type $\tau$, and that the term $t$ can be typechecked with the same type.
The term $\jump{t}{\alpha}$ itself can be used at any type $\tau'$ because it does not return to its immediately surrounding context.

\subsection{Typing Rules for Core}
\label{subsec:typing:core}

The complete typing rules for \targetlang\ are given in \cref{fig:typing:targetlang}, but we will present them step by step.
We now have producers, consumers and statements as different syntactic categories.
For each of these categories, we use a separate judgment form:

\begin{description}
  \item[Producers] The judgment $\Gamma \vdash p \prd \tau$ says that the producer $p$ has type $\tau$ in context $\Gamma$.
  \item[Consumers] The judgment $\Gamma \vdash c \cnt \tau$ says that the consumer $c$ has type $\tau$ in context $\Gamma$.
  \item[Statements] The judgment $\Gamma \vdash s$ says that the statement $s$ is well-typed in context $\Gamma$. In contrast to producers and consumers, statements do not have a type.
\end{description}
All typing judgments are also implicitly indexed by the program $P$ containing the top-level definitions.
However, as these definitions are only needed when typechecking their calls (rule \textsc{Call}), we usually omit the index from the presentation.

\begin{figure}[p]
  \begin{minipage}{0.21\textwidth}
    \begin{prooftree}
      \AxiomC{$\Gamma,\alpha\cnt \tau \vdash s$}
      \RightLabel{\textsc{$\mu$}}
      \UnaryInfC{$\Gamma \vdash \mu\alpha.s\prd\tau$}
    \end{prooftree}
  \end{minipage}
  \hfill
  \begin{minipage}{0.21\textwidth}
    \begin{prooftree}
      \AxiomC{$\Gamma,x\prd\tau \vdash s$}
      \RightLabel{\textsc{$\tilde{\mu}$}}
      \UnaryInfC{$\Gamma \vdash \tilde{\mu}x.s \cnt \tau$}
    \end{prooftree}
  \end{minipage}
  \hfill
   \begin{minipage}{0.21\textwidth}
    \begin{prooftree}
      \AxiomC{$x\prd\tau\in\Gamma$}
      \RightLabel{\textsc{Var$_1$}}
      \UnaryInfC{$\Gamma \vdash x\prd\tau$}
    \end{prooftree}
  \end{minipage}
  \hfill
  \begin{minipage}{0.21\textwidth}
    \begin{prooftree}
      \AxiomC{$\alpha\cnt\tau\in\Gamma$}
      \RightLabel{\textsc{Var$_2$}}
      \UnaryInfC{$\Gamma \vdash \alpha \cnt \tau$}
    \end{prooftree}
  \end{minipage}
  \hfill
  \vspace{1em}

  \begin{minipage}{0.45\textwidth}
    \begin{prooftree}
      \AxiomC{$\Gamma \vdash p \prd\tau$}
      \AxiomC{$\Gamma \vdash c \cnt \tau$}
      \RightLabel{\textsc{Cut}}
      \BinaryInfC{$\Gamma \vdash \cut{p}{c}$}
    \end{prooftree}
  \end{minipage}
  \hfill
  \begin{minipage}{0.45\textwidth}
    \begin{prooftree}
      \AxiomC{$\Gamma \vdash p \prd \tyint$}
      \AxiomC{$\Gamma \vdash s_1$}
      \AxiomC{$\Gamma \vdash s_2$}
      \RightLabel{\textsc{IfZ}}
      \TrinaryInfC{$\Gamma \vdash \ifzero{p}{s_1}{s_2}$}
    \end{prooftree}
  \end{minipage}
  \hfill
  \vspace{1em}

  \begin{minipage}{0.35\textwidth}
    \begin{prooftree}
      \AxiomC{\quad}
      \RightLabel{\textsc{Lit}}
      \UnaryInfC{$\Gamma \vdash \natlit{n} \prd \tyint$}
    \end{prooftree}
  \end{minipage}
  \hfill
  \begin{minipage}{0.6\textwidth}
    \begin{prooftree}
      \AxiomC{$\Gamma \vdash p_1 \prd \tyint$}
      \AxiomC{$\Gamma \vdash p_2 \prd \tyint$}
      \AxiomC{$\Gamma \vdash c \cnt \tyint$}
      \RightLabel{\textsc{binop}}
      \TrinaryInfC{$\Gamma \vdash \odot(p_1,p_2;c)$}
    \end{prooftree}
  \end{minipage}
  \hfill
  \vspace{1em}

  \begin{minipage}{\textwidth}
    \begin{prooftree}
      \AxiomC{$\mathbf{def}\ f(\overline{x_i \prd \tau_i};\overline{\alpha_j \cnt \tau_j}) \in P$}
      \AxiomC{$\overline{\Gamma \vdash p_i \prd \tau_i}$}
      \AxiomC{$\overline{\Gamma \vdash c_j \cnt \tau_j}$}
      \RightLabel{\textsc{Call}}
      \TrinaryInfC{$\Gamma \xvdash{P} f(\overline{p_i};\overline{c_j})$}
    \end{prooftree}
  \end{minipage}
  \hfill
  \vspace{1em}

  \begin{minipage}{\textwidth}
    \begin{prooftree}
      \AxiomC{$\Gamma \vdash s_1$}
      \AxiomC{$\Gamma, x\prd\tau,xs\prd\mathtt{List}(\tau)\vdash s_2$}
      \RightLabel{\textsc{Case-List}}
      \BinaryInfC{$\Gamma \vdash \SCcase{\mathtt{Nil}\Rightarrow s_1,\mathtt{Cons}(x,xs) \Rightarrow s_2} \cnt \mathtt{List}(\tau)$}
    \end{prooftree}
  \end{minipage}
  \hfill
  \vspace{1em}

  \begin{minipage}{0.45\textwidth}
    \begin{prooftree}
      \AxiomC{\quad}
      \RightLabel{\textsc{Nil}}
      \UnaryInfC{$\Gamma \vdash \mathtt{Nil}\prd\mathtt{List}(\tau)$}
    \end{prooftree}
  \end{minipage}
  \hfill
  \begin{minipage}{0.45\textwidth}
    \begin{prooftree}
      \AxiomC{$\Gamma \vdash t_1\prd\tau$}
      \AxiomC{$\Gamma \vdash t_2\prd\mathtt{List}(\tau)$}
      \RightLabel{\textsc{Cons}}
      \BinaryInfC{$\Gamma \vdash \mathtt{Cons}(t_1,t_2)\prd\mathtt{List}(\tau)$}
    \end{prooftree}
  \end{minipage}
  \hfill
  \vspace{1em}

  \begin{minipage}{0.4\textwidth}
    \begin{prooftree}
      \AxiomC{$\Gamma \vdash t_1 \prd \tau_1$}
      \AxiomC{$\Gamma \vdash t_2 \prd \tau_2$}
      \RightLabel{\textsc{Tup}}
      \BinaryInfC{$\Gamma \vdash \mathtt{Tup}(t_1,t_2)\prd\mathtt{Pair}(\tau_1,\tau_2)$}
    \end{prooftree}
  \end{minipage}
  \hfill
  \begin{minipage}{0.55\textwidth}
    \begin{prooftree}
      \AxiomC{$\Gamma, x\prd\tau_1,y\prd\tau_2 \vdash s$}
      \RightLabel{\textsc{Case-Pair}}
      \UnaryInfC{$\Gamma \vdash \SCcase{\mathtt{Tup}(x,y)\Rightarrow s}\cnt\mathtt{Pair}(\tau_1,\tau_2)$}
    \end{prooftree}
  \end{minipage}
  \hfill
  \vspace{1em}

  \begin{minipage}{0.45\textwidth}
    \begin{prooftree}
      \AxiomC{$\Gamma \vdash k\cnt\tau$}
      \RightLabel{\textsc{Hd}}
      \UnaryInfC{$\Gamma \vdash \mathtt{hd}(k)\cnt \mathtt{Stream}(\tau)$}
    \end{prooftree}
  \end{minipage}
  \hfill
  \begin{minipage}{0.45\textwidth}
    \begin{prooftree}
      \AxiomC{$\Gamma \vdash k\cnt\mathtt{Stream}(\tau)$}
      \RightLabel{\textsc{Tl}}
      \UnaryInfC{$\Gamma \vdash \mathtt{tl}(k)\cnt \mathtt{Stream}(\tau)$}
    \end{prooftree}
  \end{minipage}
  \hfill
  \vspace{1em}

  \begin{minipage}{\textwidth}
    \begin{prooftree}
      \AxiomC{$\Gamma,\alpha\cnt \tau \vdash s_1$}
      \AxiomC{$\Gamma,\beta\cnt\mathtt{Stream}(\tau)\vdash s_2$}
      \RightLabel{\textsc{Cocase-Stream}}
      \BinaryInfC{$\Gamma \vdash \SCcocase{\mathtt{hd}(\alpha)\Rightarrow s_1,\mathtt{tl}(\beta)\Rightarrow s_2} \prd \mathtt{Stream}(\tau)$}
    \end{prooftree}
  \end{minipage}
  \hfill
  \vspace{1em}

  \begin{minipage}{0.45\textwidth}
    \begin{prooftree}
      \AxiomC{$\Gamma \vdash k\cnt \tau_1$}
      \RightLabel{\textsc{Fst}}
      \UnaryInfC{$\Gamma \vdash \mathtt{fst}(k)\cnt\mathtt{LPair}(\tau_1,\tau_2)$}
    \end{prooftree}
  \end{minipage}
  \hfill
  \begin{minipage}{0.45\textwidth}
    \begin{prooftree}
      \AxiomC{$\Gamma \vdash k\cnt\tau_2$}
      \RightLabel{\textsc{Snd}}
      \UnaryInfC{$\Gamma \vdash \mathtt{snd}(k)\cnt \mathtt{LPair}(\tau_1,\tau_2)$}
    \end{prooftree}
  \end{minipage}
  \hfill
  \vspace{1em}

  \begin{minipage}{\textwidth}
    \begin{prooftree}
      \AxiomC{$\Gamma, \alpha \cnt\tau_1 \vdash s_1$}
      \AxiomC{$\Gamma, \beta \cnt\tau_2 \vdash s_2$}
      \RightLabel{\textsc{Cocase-LPair}}
      \BinaryInfC{$\Gamma \vdash \SCcocase{\mathtt{fst}(\alpha)\Rightarrow s_1, \mathtt{snd}(\beta)\Rightarrow s_2} \prd \mathtt{LPair}(\tau_1,\tau_2)$}
     \end{prooftree}
  \end{minipage}
  \hfill
  \vspace{1em}

  \begin{minipage}{0.4\textwidth}
    \begin{prooftree}
      \AxiomC{$\Gamma \vdash p\prd\sigma$}
      \AxiomC{$\Gamma \vdash c\cnt\tau$}
      \RightLabel{\textsc{Ap}}
      \BinaryInfC{$\Gamma \vdash \mathtt{ap}(p,c) \cnt \sigma\to\tau$}
    \end{prooftree}
  \end{minipage}
  \hfill
  \begin{minipage}{0.55\textwidth}
    \begin{prooftree}
      \AxiomC{$\Gamma, x\prd\sigma,\alpha\cnt\tau \vdash s$}
      \RightLabel{\textsc{Cocase-Fun}}
      \UnaryInfC{$\Gamma \vdash \SCcocase{\mathtt{ap}(x,\alpha) \Rightarrow s}\prd\sigma\to\tau$}
    \end{prooftree}
  \end{minipage}
  \hfill
  \vspace{1.5em}
  \rule{\textwidth}{0.4pt}
  \par
  \vspace{1em}
  \begin{minipage}{0.25\textwidth}
    \begin{prooftree}
      \AxiomC{}
      \RightLabel{\textsc{Wf-Empty}}
      \UnaryInfC{$\vdash \wellformed{\emptyset}$}
    \end{prooftree}
  \end{minipage}
  \hfill
  \begin{minipage}{0.7\textwidth}
    \begin{prooftree}
      \AxiomC{$\vdash \wellformed{P}$}
      \AxiomC{$\overline{x\prd\tau_i},\overline{\alpha\cnt\tau_j} \xvdash{P,\mathbf{def}\ \text{f}(\overline{x_i\prd\tau_i};\overline{\alpha_j\cnt\tau_j})\coloneq s} s$}
      \RightLabel{\textsc{Wf-Cons}}
      \BinaryInfC{$\vdash \wellformed{P,\mathbf{def}\ \text{f}(\overline{x_i\prd\tau_i},\overline{\alpha_j\cnt\tau_j}) \coloneq s}$}
    \end{prooftree}
  \end{minipage}
  \hfill
  \caption{Typing rules of \targetlang.}
  \label{fig:typing:targetlang}
\end{figure}

The three different judgments can be illustrated by the rules for variables, covariables and cuts.
In the rules \textsc{Var}$_1$ and \textsc{Var}$_2$ we check that a variable or covariable is contained in the typing context $\Gamma$ and then type the variable as a producer or the covariable as a consumer.
The rule \textsc{Cut} combines a producer $p$ and consumer $c$ of the same type $\tau$ into the statement $\cut{p}{c}$ which does not have a type.

\begin{minipage}{0.25\textwidth}
  \begin{prooftree}
    \AxiomC{$x\prd\tau\in\Gamma$}
    \RightLabel{\textsc{Var}$_1$}
    \UnaryInfC{$\Gamma \vdash x \prd \tau$}
  \end{prooftree}
\end{minipage}
\hfill
\begin{minipage}{0.25\textwidth}
  \begin{prooftree}
    \AxiomC{$\alpha\cnt\tau\in\Gamma$}
    \RightLabel{\textsc{Var}$_2$}
    \UnaryInfC{$\Gamma \vdash \alpha \cnt \tau$}
  \end{prooftree}
\end{minipage}
\hfill
\begin{minipage}{0.4\textwidth}
  \begin{prooftree}
    \AxiomC{$\Gamma \vdash p \prd \tau$}
    \AxiomC{$\Gamma \vdash c \cnt \tau$}
    \RightLabel{\textsc{Cut}}
    \BinaryInfC{$\Gamma \vdash \cut{p}{c}$}
  \end{prooftree}
\end{minipage}
\smallskip

The two unusual constructs which are central to \targetlang\ and give the \lambdamumu\ its name are the $\mu$- and $\tilde\mu$-abstractions.
A $\mu$-abstraction $\mu \alpha.s$ abstracts over a consumer $\alpha$ of type $\tau$ in the statement $s$ and is typed as a producer of type $\tau$.
A $\tilde\mu$-abstraction $\tilde\mu x.s$ abstracts over a producer $x$ of type $\tau$ and is typed as a consumer of type $\tau$, which can be seen in the following two rules.

\begin{minipage}{0.45\textwidth}
  \begin{prooftree}
    \AxiomC{$\Gamma,\alpha \cnt \tau \vdash s$}
    \RightLabel{\textsc{$\mu$}}
    \UnaryInfC{$\Gamma \vdash \mu\alpha.s \prd \tau$}
  \end{prooftree}
\end{minipage}
\hfill
\begin{minipage}{0.45\textwidth}
  \begin{prooftree}
    \AxiomC{$\Gamma,x \prd \tau \vdash s$}
    \RightLabel{\textsc{$\tilde\mu$}}
    \UnaryInfC{$\Gamma\vdash \tilde\mu x.s \cnt \tau$}
  \end{prooftree}
\end{minipage}
\smallskip

\subsubsection{Data and Codata Types}
\cref{fig:typing:targetlang} contains the typing rules for both $\mathtt{Pair}$ and $\mathtt{List}$; since their rules are so similar we only discuss those of $\mathtt{Pair}$ explicitly:

\begin{minipage}{0.35\textwidth}
  \begin{prooftree}
    \AxiomC{$\Gamma \vdash t_1 \prd \tau_1$}
    \AxiomC{$\Gamma \vdash t_2 \prd \tau_2$}
    \RightLabel{\textsc{Tup}}
    \BinaryInfC{$\Gamma \vdash \mathtt{Tup}(t_1,t_2) \prd \mathtt{Pair}(\tau_1,\tau_2)$}
  \end{prooftree}
\end{minipage}
\hfill
\begin{minipage}{0.575\textwidth}
  \begin{prooftree}
    \AxiomC{$\Gamma, x \prd \tau_1, y \prd \tau_2 \vdash s$}
    \RightLabel{\textsc{Case-Pair}}
    \UnaryInfC{$\Gamma \vdash \SCcase{\mathtt{Tup}(x,y)\Rightarrow s} \cnt \mathtt{Pair}(\tau_1,\tau_2)$}
  \end{prooftree}
\end{minipage}
\smallskip

\noindent
In the rule \textsc{Tup} we type a pair constructor $\mathtt{Tup}$ applied to two arguments as a producer, and in the rule \textsc{Case-Pair} we type the case, which pattern-matches on this constructor and brings two variables into scope, as a consumer.

The typing rules for codata types look exactly the same, only the roles of producers and consumers are swapped.

\begin{minipage}{0.3\textwidth}
\begin{prooftree}
  \AxiomC{$\Gamma \vdash k \cnt \tau$}
  \RightLabel{\textsc{Hd}}
  \UnaryInfC{$\Gamma \vdash \mathtt{hd}(k) \cnt \mathtt{Stream}(\tau)$}
\end{prooftree}
\end{minipage}
\hfill
\begin{minipage}{0.65\textwidth}
  \begin{prooftree}
    \AxiomC{$\Gamma, \alpha \cnt \tau \vdash s_1$}
    \AxiomC{$\Gamma, \beta\cnt\mathtt{Stream}(\tau) \vdash s_2$}
    \RightLabel{\textsc{Cc-Str}}
    \BinaryInfC{$\Gamma \vdash \SCcocase{\mathtt{hd}(\alpha)\Rightarrow s_1, \mathtt{tl}(\beta)\Rightarrow s_2} \prd \mathtt{Stream}(\tau)$}
  \end{prooftree}
\end{minipage}
\smallskip

Most of the other rules directly correspond to a similar rule for \surfacelang.
When typing arithmetic expressions, for example, we only have to make sure all subterms have type $\tyint$.

We typecheck programs using the two rules \textsc{Wf-Empty} and \textsc{Wf-Cons}.
The former is used to typecheck an empty program, and the rule \textsc{Wf-Cons} extends a typechecked program with a new top-level definition.
When we typecheck the body of this top-level definition that we are about to add, we extend the program with this definition so that it can refer to itself recursively.

\subsection{Type Soundness}
\label{subsec:typing:theorems}

In this section, we discuss the soundness of the type systems for both \surfacelang\ and \targetlang\ and show that the translation $\translate{-}$ preserves the typeability of terms.
We follow \citet{Wright1994Soundness} in presenting type soundness as the combination of a progress and a preservation theorem.
\begin{theorem}[Progress, \surfacelang]
  \label{teo:progress-surface}
  Let $t$ be a closed term in \surfacelang, such that $\vdash t:\tau$ for some type $\tau$.
  Then either $t$ is a value or there is some term $t^{\prime}$ such that $t\reducesto t^{\prime}$.
\end{theorem}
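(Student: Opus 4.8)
The plan is to prove progress by structural induction on the typing derivation $\vdash t : \tau$, following the standard Wright–Felleisen recipe adapted to the evaluation-context semantics of \surfacelang. The key auxiliary notion is a canonical-forms lemma, which I would state and prove first: if $\vdash \valueof{t} : \tau$ is a closed value, then its shape is determined by $\tau$ — a value of type $\tyint$ is a literal $\natlit{n}$, a value of type $\mathtt{List}(\sigma)$ is $\mathtt{Nil}$ or $\mathtt{Cons}(\valueof{t_1},\valueof{t_2})$, a value of type $\mathtt{Pair}(\sigma_1,\sigma_2)$ is $\mathtt{Tup}(\valueof{t_1},\valueof{t_2})$, a value of type $\sigma \to \tau$ is a $\lambda$-abstraction, and a value of a codata type is a $\mathbf{cocase}$ with the matching destructors. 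Each case follows by inspecting which typing rules can conclude with a value on the left and the given type on the right.

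The main induction then proceeds by cases on the last typing rule. Variable rules are vacuous since $t$ is closed; the introduction rules for values (\textsc{Lit}, \textsc{Nil}, \textsc{Cons}, \textsc{Tup}, $\lambda$, $\mathbf{cocase}$) either immediately give that $t$ is a value (when all relevant subterms are values) or, if some subterm $t_i$ is not a value, apply the induction hypothesis to get $t_i \reducesto t_i'$ and then lift this step through the appropriate evaluation context (e.g. $K(\overline{\valueof{t}},\square,\overline{t})$), so $t \reducesto t'$. The elimination rules ($\odot$, $\mathbf{ifz}$, $\mathbf{case}$, destructor application $t.D(\overline{t})$, function application $t_1\ t_2$, $\letin{x}{t_1}{t_2}$, and top-level calls $f(\overline{t};\overline{\alpha})$) are handled the same way: if the scrutinee/principal argument is not a value, step inside the evaluation context; if it is a value, invoke canonical forms to learn its shape and fire the corresponding reduction rule from \cref{def:syntax:arithmetic-expressions-eval} and the later definitions. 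For $\mathbf{case}$ over \texttt{List}, canonical forms guarantees the scrutinee is $\mathtt{Nil}$ or $\mathtt{Cons}(\ldots)$ and that the \textsc{Case-List} rule supplies a matching branch; similarly for \texttt{Pair}, streams, \texttt{LPair}, and functions.

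The genuinely delicate case is \textsc{Label}/\textsc{Goto}. For $\jump{t}{\alpha}$ with $\alpha \cnt \tau \in \Gamma$: since $t$ is closed here $\Gamma$ is empty, so in fact this subcase cannot arise at the top level — a closed well-typed term never has a bare $\mathbf{goto}$ to a free covariable. For $\labelterm{\alpha}{t}$: by the induction hypothesis (applied in the extended context $\alpha\cnt\tau$, which requires a mild strengthening of the statement to open terms whose only free variables are covariables, or a separate treatment), either $t$ steps — in which case $\labelterm{\alpha}{t}$ steps by the context $\labelterm{\alpha}{\square}$ — or $t$ is a value, in which case $\labelterm{\alpha}{\valueof{t}} \reducesto \valueof{t}$ provided $\alpha \notin \mathrm{fv}(\valueof{t})$; and if $\alpha$ does occur, we must use the full $\mathbf{label}/\mathbf{goto}$ semantics that identifies an innermost $\mathbf{goto}(\valueof{t'};\alpha)$ and fires $E[\labelterm{\alpha}{E'[\jump{\valueof{t'}}{\alpha}]}] \reducesto E[\valueof{t'}]$. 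This last point is exactly the subtlety the paper flags after \cref{def:focusing:evaluationcontexts} — progress is what forces the full semantics — so I expect the bulk of the real work, and the place where the approximate rules are insufficient, to be a decomposition lemma: any well-typed term of the form $\labelterm{\alpha}{t}$ with $t$ not a value and $\alpha$ free in $t$ can be written as $\labelterm{\alpha}{E'[\jump{\valueof{t'}}{\alpha}]}$ with $E'$ label-free, so that the reduction rule applies. Everything else is a routine traversal of the evaluation-context grammar.
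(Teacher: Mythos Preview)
Your treatment of the standard constructs is fine and matches the paper's one-line ``induction on typing derivations.''  The $\mathbf{label}/\mathbf{goto}$ case, however, has a real gap.

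The ``mild strengthening'' you propose---progress for terms whose only free variables are covariables---is false under the approximate semantics you are using: the term $\jump{\natlit{5}}{\alpha}$ with $\alpha \cnt \tyint$ free is well-typed, not a value, and cannot step, since the $\mathbf{goto}$ rule you cite requires a surrounding $\mathbf{label}$.  So you cannot apply the induction hypothesis to the body of a $\mathbf{label}$ this way.  Your fallback decomposition lemma also fails: take $\labelterm{\alpha}{\lambda x.\jump{x}{\alpha}}$.  The body is a value with $\alpha$ free, yet it is not of the form $E'[\jump{\valueof{t'}}{\alpha}]$ because the $\mathbf{goto}$ sits under a $\lambda$, not in an evaluation context.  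Neither approximate rule fires, and your case analysis is stuck exactly where the paper warned it would be.

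The paper's actual fix (\cref{sec:appendix:semantics-label}) sidesteps all of this: evaluation contexts $E$ are promoted to runtime values, $\labelterm{\alpha}{E}$ is \emph{removed} from the grammar of evaluation contexts, and the single rule $E[\labelterm{\alpha}{t}] \reducesto E[t[E/\alpha]]$ fires unconditionally as soon as a $\mathbf{label}$ reaches evaluation position.  With this semantics the $\textsc{Label}$ case of progress is trivial---no induction hypothesis on the body, no decomposition lemma---and the $\textsc{Goto}$ case is handled because after the substitution $[E/\alpha]$ the term $\jump{\valueof{t}}{E}$ steps by the companion rule $E'[\jump{\valueof{t}}{E}] \reducesto E[\valueof{t}]$.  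The induction then needs to cover the new value form $E$ (typed by \textsc{Ctx}), but that case is immediate since $E$ is already a value.
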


This can easily be proved with an induction on typing derivations.
Due to the presence of the $\mathbf{label}/\mathbf{goto}$ construct, the standard formulation of the (strong) preservation theorem does not immediately hold for \surfacelang\ (also see the discussion in \cref{sec:appendix:semantics-label}).
The following weak form of preservation can again be easily proved by induction.

\begin{theorem}[(Weak) Preservation, \surfacelang]
  \label{teo:preservation-surface}
  Let $t,t^{\prime}$ be terms in \surfacelang\ such that $t\reducesto t^{\prime}$, $\Gamma$ an environment and $P$ a program such that $\Gamma \xvdash{P} t:\tau$.
  Then there is a type $\tau^{\prime}$ such that $\Gamma \xvdash{P} t^{\prime}:\tau^{\prime}$.
\end{theorem}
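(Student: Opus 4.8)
The plan is to prove the statement by induction on the structure of the evaluation context $E$ witnessing $t = E[r] \reducesto E[r'] = t'$ (equivalently, by induction on the typing derivation of $t$), in the standard style for syntactic type soundness. The two auxiliary results I expect to need are a \emph{substitution lemma} — if $\Gamma, x \prd \sigma \vdash t : \tau$ and $\Gamma \vdash \valueof{t} : \sigma$ then $\Gamma \vdash t[\valueof{t}/x] : \tau$, together with the analogous statement for a covariable $\alpha \cnt \sigma$ and a covalue — a \emph{weakening lemma} for context entries that are not used, and a few \emph{inversion} and \emph{canonical-forms} facts (a value of type $\mathtt{List}(\tau)$ is $\mathtt{Nil}$ or a $\mathtt{Cons}$ of two values, a value of type $\sigma \to \tau$ is a $\lambda$-abstraction, and so on). All of these are routine for \surfacelang, whose typing rules are standard.

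In the base case $E = \square$, the term $t$ is itself a redex and we inspect the contraction rule. For the arithmetic, $\mathbf{ifz}$, $\mathbf{let}$, application, top-level-call, and data/codata elimination redexes, an inversion step reads off the types of the values being substituted (using canonical forms), and the substitution lemma then yields that the contractum has \emph{exactly} the type $\tau$ of the redex — so here even strong preservation holds. For the $\mathbf{label}$ and $\mathbf{goto}$ contractions one uses that, by rule \textsc{Label}, the body of $\labelterm{\alpha}{t}$ and the bound covariable $\alpha$ carry the same type as the whole labelled expression; hence the value that is ultimately returned — whether because the body is already a value with $\alpha$ unused, or because it is the value thrown to $\alpha$ by a $\mathbf{goto}$ — can be retyped at that type, and the permissive typing of rule \textsc{Goto} (a $\mathbf{goto}$ may be assigned \emph{any} result type) is exactly what lets us reconstruct a typing after the jump has discarded its surrounding context.

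In the inductive step $E \neq \square$, the contracted redex sits strictly inside $t$; we apply the induction hypothesis to the reducing immediate subterm and then rebuild the derivation around it. The delicate point is that, to reassemble constructs such as $\case{E_0[r]}{\ldots}$ or $\jump{E_0[r]}{\alpha}$, we need the reduct of the subterm to have \emph{the same} type, not merely some type. I would therefore thread a slightly sharper invariant through the induction: a reduction that takes place strictly inside a term preserves its type exactly, and only a root contraction may change it — and, as the base case above shows, even the root $\mathbf{label}/\mathbf{goto}$ contractions treated there preserve it. With that invariant in hand, every context-rebuilding step has a type of the right shape available, and the \enquote{weak} phrasing of the statement is what reconciles this invariant with the full operational semantics of $\mathbf{label}/\mathbf{goto}$ sketched in \cref{sec:appendix:semantics-label}, where a non-local jump may legitimately unwind through a context that expected a different type.

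The step I expect to be the main obstacle is precisely this interaction between $\mathbf{goto}$ and evaluation contexts: pinning down an induction hypothesis strong enough to survive the congruence cases for \textsc{Goto} and for the scrutinee of a $\mathbf{case}$ or $\mathbf{ifz}$, while remaining weak enough to hold under the full jump semantics. Once the invariant is fixed, each individual case is a short calculation with the substitution, weakening, and inversion lemmas.
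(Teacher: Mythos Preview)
Your proposal is correct and is precisely the kind of induction the paper has in mind; the paper itself gives no proof beyond the remark that the result ``can again be easily proved by induction,'' so your elaboration (substitution and weakening lemmas, inversion and canonical forms for the basic redexes, and the observation that only the full $\mathbf{label}/\mathbf{goto}$ semantics of \cref{sec:appendix:semantics-label} forces the weakening from $\tau$ to some $\tau'$) is exactly what that one-line claim unpacks to. Your sharper invariant---that reductions strictly inside a context preserve the type exactly---is the right way to make the congruence cases go through, and it matches the paper's later remark that strong preservation can be recovered by the Wright--Felleisen technique of tracking the overall answer type.
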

The usual strong preservation theorem requires $\tau^{\prime} = \tau$.
But in fact, a slight variation of this strong form can be proved for \surfacelang\ by adapting the technique found in Section 6 in \cite{Wright1994Soundness}.
Thus, strong type soundness still does hold.

Before we can state the analogous theorems for \targetlang, we will need an additional definition as a termination condition for evaluation.
\begin{definition}[Terminal statement]
  If $\valueof{p}$ is a producer value in \targetlang\ and $\vartop$ a covariable which does not appear free in $\valueof{p}$, then $\cut{\valueof{p}}{\vartop}$ is called a \emph{terminal statement}.
\end{definition}
Terminal statements in \targetlang\ have the same role as values in \surfacelang.
Some sequent-calculus-based languages use a special statement $\mathbf{Done}$ instead of terminal statements for this purpose.

\begin{theorem}[Progress, \targetlang]
  \label{teo:progress-target}
  Let $s$ be a focused statement in \targetlang\ such that $\vdash s$.
  Then either $s$ is a terminal statement, or there is some $s^{\prime}$ such that $s\reducesto s^{\prime}$.
\end{theorem}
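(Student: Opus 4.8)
The plan is to prove progress for \targetlang{} by structural induction on the typing derivation of the focused statement $s$, since there are only four syntactic forms of statements: cuts $\cut{p}{c}$, arithmetic operations $\odot(p_1,p_2;c)$, conditionals $\ifzero{p}{s_1}{s_2}$, and top-level calls $f(\overline{p};\overline{c})$. In each case, the key leverage is a \emph{canonical forms lemma}: because $s$ is closed ($\vdash s$ uses the empty context) and \emph{focused}, any producer subterm that must be inspected is forced to be a value of a shape dictated by its type, and similarly every consumer is a covalue (as already noted in the text, all consumers are covalues). So first I would establish, as an auxiliary lemma, that a closed focused producer of type $\tyint$ is a literal $\natlit{n}$; of type $\mathtt{List}(\tau)$ is $\mathtt{Nil}$ or $\mathtt{Cons}(\valueof{p},\valueof{q})$; of type $\mathtt{Pair}(\tau_1,\tau_2)$ is $\mathtt{Tup}(\valueof{p},\valueof{q})$; and of a codata type is a $\mu$-abstraction or a $\mathbf{cocase}$ with the appropriate destructors. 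The case $\mu\alpha.s'$ is always possible and is precisely where the covariable $\vartop$ / terminal-statement condition gets used.

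The inductive cases then go as follows. For $\ifzero{\valueof{p}}{s_1}{s_2}$: by canonical forms $\valueof{p}=\natlit{n}$, and one of the two $\mathbf{ifz}$ reduction rules fires. (Focusing guarantees the scrutinee is already a value, so no $\varsigma$-style lifting is needed.) For $\odot(\valueof{p}_1,\valueof{p}_2;c)$: both arguments are literals by canonical forms, so $\odot(\natlit{n},\natlit{m};c)\reducesto\cut{\natlit{n\odot m}}{c}$. For a call $f(\overline{\valueof{p}};\overline{\valueof{c}})$: well-typedness via \textsc{Call} together with the \textsc{Wf-Cons} rule means $f$ has a defining body in $P$, so the definition-unfolding rule applies; again focusing ensures all producer arguments are values so the rule's side condition is met. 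The interesting case is the cut $\cut{p}{c}$. Here I case-split on $c$. If $c=\alpha$, then since $s$ is closed $\alpha$ must be $\vartop$, and I case-split on $p$: if $p=\mu\alpha'.s'$ then $\cut{\mu\alpha'.s'}{\vartop}\reducesto s'[\vartop/\alpha']$; otherwise $p$ is a value and $\cut{\valueof{p}}{\vartop}$ is a terminal statement (using that $\vartop$ does not occur free in the closed $\valueof{p}$). If $c$ is a $\tilde\mu$, a $\mathbf{case}$, or a destructor $D(\overline{\valueof{p}};\overline{c})$, then I case-split on $p$: if $p$ is a $\mu$-abstraction the $\mu$-rule fires (note the consumer $c$ is a covalue, so $\cut{\mu\alpha.s'}{\valueof{c}}\reducesto s'[\valueof{c}/\alpha]$ applies); otherwise $p$ is a value, and by canonical forms matched against the type of $c$, $p$ and $c$ form a redex — $\cut{\valueof{p}}{\tilde\mu x.s'}$, a constructor against a $\mathbf{case}$, or a $\mathbf{cocase}$ against a destructor — and here I must invoke that focusing forces the producer arguments of the destructor to be values, which is exactly the subtlety the paper flags in \cref{sec:focusing} and which makes the codata reduction rule fire.

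The main obstacle I expect is bookkeeping around the interaction of focusing with the canonical forms argument: I need to be careful that the syntactic subclass "focused statement" is genuinely closed under the subterm relation in the way the induction requires, and that in every case where a reduction rule has a value-shaped side condition (destructor arguments, call arguments, binop arguments, the scrutinee of a cut against a $\mathbf{case}$) that condition is discharged \emph{by focusedness} rather than needing a separate evaluation step. A secondary point worth stating explicitly is that the notion of "value" for producers in a focused term is not automatic — I should confirm that a closed focused producer which is not a $\mu$-abstraction is in fact a value $\valueof{p}$, which is really the heart of the canonical forms lemma and the place the argument could go wrong if the focusing definition in \cref{def:static-focusing} left some non-value producer in an evaluation position. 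Once that lemma is in hand, the theorem itself is a short exhaustive case analysis.
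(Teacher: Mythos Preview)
The paper does not actually give a proof of this theorem; it is stated and followed only by a remark explaining why the \emph{focused} hypothesis is needed (contrasting static with dynamic focusing). Your plan---canonical forms for closed focused producers, followed by an exhaustive case analysis on the four statement forms, with focusing discharging every value side-condition in the reduction rules---is the standard and correct way to prove progress here, and it matches the spirit of the paper's one-line remarks for the neighbouring theorems (\enquote{easily proved by induction on typing derivations}).

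One point to tighten: in the cut case with $c=\vartop$ and $p$ a value, you justify terminality by saying $\vartop$ does not occur free in \enquote{the closed $\valueof{p}$}. But the hypothesis $\vdash s$ as written cannot literally mean the empty context (a terminal statement $\cut{\valueof{p}}{\vartop}$ would then be untypable); the intended reading is that $s$ is typed in the singleton context $\vartop\cnt\tau_0$. Under that reading $\valueof{p}$ may in principle contain $\vartop$ free (e.g.\ inside the body of a $\mathbf{cocase}$ or in a constructor's consumer argument), so your parenthetical does not go through as stated. This is a wrinkle the paper itself glosses over in the definition of terminal statement; for a clean proof you should either (i) strengthen the hypothesis to say evaluation starts from $\cut{\translate{t}}{\vartop}$ with $t$ closed, so that $\vartop$ never gets captured inside a value, or (ii) relax the definition of terminal statement to allow $\vartop$ free in $\valueof{p}$. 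Either fix is easy, but you should flag which one you take. Apart from this, your decomposition and the use of focusing to guarantee the destructor-argument and call-argument value conditions are exactly right.
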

For this theorem, we require $s$ to be focused, in contrast to \surfacelang, where progress holds for any (well-typed) term.
This is because we have used static focusing for full evaluation in \targetlang.
If we used dynamic focusing instead, this requirement could be dropped, corresponding to using evaluation contexts (dynamic) in \surfacelang, instead of a translation to normal form (static).

The Preservation theorem for \targetlang\ is analogous to \surfacelang.
\begin{theorem}[Preservation, \targetlang]
  Let $s,s^{\prime}$ be statements in \targetlang\ with $s\reducesto s^{\prime}$, $\Gamma$ an environment and $P$ a program.
  If $\Gamma \xvdash{P} s$, then $\Gamma \xvdash{P} s^{\prime}$.
\end{theorem}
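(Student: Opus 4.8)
I would prove the theorem by case analysis on the reduction rule that justifies $s \reducesto s'$. Reduction in \targetlang\ rewrites a statement only at its top level, so there are finitely many cases and no further inductive cases from a congruence closure: the two $\mathbf{ifz}$ rules, the binary-operator rule, the $\mu$- and $\tilde\mu$-redex rules, the \textsc{Call} rule, and the two cut rules for data and codata. In each case I first apply \emph{inversion} to the derivation of $\Gamma \xvdash{P} s$ — every typing rule of \targetlang\ is syntax-directed, so the shape of $s$ pins down the last rule used — and then reassemble a derivation of $\Gamma \xvdash{P} s'$ from its premises. Note that statements carry no type, so there is nothing to track across the step: preservation for \targetlang\ is automatically the strong form, which is why the weakened conclusion of \cref{teo:preservation-surface} is not needed here.

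\textbf{Key lemma.} The one real ingredient is a \textbf{substitution lemma}, which I would state simultaneously for the three judgment forms and prove by mutual induction on typing derivations: if $\Gamma, x \prd \tau \vdash J$ and $\Gamma \vdash \valueof{p} \prd \tau$, where $J$ ranges over $q \prd \sigma$, $c \cnt \sigma$ and the statement judgment, then $\Gamma \vdash J[\valueof{p}/x]$; dually, if $\Gamma, \alpha \cnt \tau \vdash J$ and $\Gamma \vdash \valueof{c} \cnt \tau$ then $\Gamma \vdash J[\valueof{c}/\alpha]$. A routine weakening lemma is also needed. The \textsc{Call} and (co)data cases perform \emph{simultaneous} substitutions $s_0[\overline{\valueof p}/\overline x;\overline{\valueof c}/\overline\alpha]$; I handle these by iterating the single-variable lemma, using that each $\valueof{p_i},\valueof{c_j}$ is typed in $\Gamma$ and so is untouched by the preceding substitutions (after the usual $\alpha$-renaming to avoid capture, e.g.\ of a free variable inside a $\tilde\mu$-abstraction being substituted for a covariable).

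\textbf{The cases.} With these in hand the cases are short. For the $\mathbf{ifz}$ rules, inversion on \textsc{IfZ} already yields $\Gamma \vdash s_1$ and $\Gamma \vdash s_2$, and $s'$ is one of them. For the binop rule, inversion on \textsc{binop} gives $\Gamma \vdash c \cnt \tyint$; since $\natlit{n\odot m}$ is again an integer literal, \textsc{Lit} and \textsc{Cut} rebuild $\Gamma \vdash \cut{\natlit{n\odot m}}{c}$. For the $\mu$-redex $\cut{\mu\alpha.s_0}{\valueof c}$, inversion on \textsc{Cut} and \textsc{$\mu$} gives $\Gamma,\alpha\cnt\tau \vdash s_0$ and $\Gamma \vdash \valueof c \cnt \tau$, so the substitution lemma gives $\Gamma \vdash s_0[\valueof c/\alpha]$; the $\tilde\mu$-redex is symmetric. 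For the \textsc{Call} redex, inversion on \textsc{Call} gives $\mathbf{def}\ f(\overline{x_i\prd\tau_i};\overline{\alpha_j\cnt\tau_j}) \in P$ together with $\overline{\Gamma \vdash \valueof{p_i}\prd\tau_i}$ and $\overline{\Gamma \vdash \valueof{c_j}\cnt\tau_j}$, and well-formedness of $P$ (via \textsc{Wf-Cons}) gives that the body $s_0$ is typed in $\overline{x_i\prd\tau_i},\overline{\alpha_j\cnt\tau_j}$; weakening into $\Gamma$ and then the iterated substitution lemma yield $\Gamma \xvdash{P} s_0[\overline{\valueof{p_i}}/\overline{x_i};\overline{\valueof{c_j}}/\overline{\alpha_j}]$. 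For the data cut $\cut{K(\overline{\valueof p};\overline{\valueof c})}{\SCcase{\ldots}}$, inversion on \textsc{Cut} forces both sides at one type, inversion on the constructor rule (\textsc{Cons}, \textsc{Tup}, $\ldots$) gives the argument types, inversion on the matching \textsc{Case-}$\ast$ rule gives the selected branch $s_0$ typed in $\Gamma$ extended by the pattern variables at exactly those types, and substitution closes the case; the codata cut is dual, with cocase and destructor swapped.

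\textbf{Main obstacle.} The delicate part is getting the substitution lemma exactly right: the mutual induction over producers, consumers and statements must be set up so that substituting into a $\mu$- or $\tilde\mu$-abstraction recurses through a statement, substituting into a constructor or case recurses through all three categories, and so on; and the bookkeeping in the \textsc{Call} case — aligning the parameter context recovered from \textsc{Wf-Cons} with the actual argument types and discharging all of $\overline x$ and $\overline\alpha$ in one go — is where the paper's implicit conventions about freshness and simultaneous substitution have to be made honest. Everything else is a mechanical traversal of the typing rules.
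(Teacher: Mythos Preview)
Your proposal is correct and is the standard way to flesh out what the paper dismisses in a single line as ``a straightforward induction on typing derivations.'' Since \targetlang\ has no congruence reduction rules, your case analysis on the reduction rule is equivalent to (and cleaner than) the paper's phrasing; the substitution lemma, weakening, and the use of \textsc{Wf-Cons} for the \textsc{Call} case are exactly the ingredients one needs, and your observation that statements carry no type so preservation is automatically strong is also on point.
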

This theorem can also be proven with a straightforward induction on typing derivations.
Of course, this preservation theorem does not make any assertion about result types, as statements do not return anything that could be typed.
However, if evaluation starts with a statement $\cut{p}{\vartop}$ where $\vartop$ does not occur free in $p$ and ends in a terminal statement $s$, then $s = \cut{\valueof{p}}{\vartop}$ for some producer value $\valueof{p}$.
This is because no reduction step can introduce a free variable, so the final one must be the same as the initial one.
Hence, by well-typedness, if $p \prd \tau$, then also $\valueof{p} \prd \tau$, because $\vartop \cnt \tau$.

Lastly, we come to an important property of the translation between these languages:
\begin{theorem}[Type Preservation of Translation]
  \label{teo:preservation-translation}
  Let $t$ be a term in \surfacelang, $\Gamma$ an environment and $P$ a program.
  If $\Gamma \xvdash{P} t:\tau$ for some type $\tau$, then $\Gamma \xvdash{\translate{P}} \translate{t} \prd \tau$ where $\translate{P}$ denotes the translation of all definitions in $P$.
\end{theorem}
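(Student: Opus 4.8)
The plan is to prove this by induction on the typing derivation of $\Gamma \xvdash{P} t : \tau$ in \surfacelang. Since the translation $\translate{-}$ is defined by structural recursion on terms and is type-directed only in the mild sense that it introduces fresh covariables, each typing rule for \surfacelang\ will correspond to a small derivation tree in \targetlang\ built from the rules in \cref{fig:typing:targetlang}. Before starting the main induction, I would first establish two auxiliary lemmas. The first is a \emph{weakening} lemma for \targetlang: if $\Gamma \vdash p \prd \tau$ and $\alpha$ is fresh, then $\Gamma, \alpha \cnt \sigma \vdash p \prd \tau$ (and similarly for consumers and statements); this is needed because the translation repeatedly introduces fresh covariables via $\mu$-bindings. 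The second is that the translation of contexts is the identity — $\Gamma$ is already in the shared syntax of \cref{def:typing:types-contexts}, so this is immediate, but it is worth stating that producer variables stay producer variables and covariable bindings are untouched.

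The core of the induction proceeds case by case on the last rule used. For a variable $x$ with $x \prd \tau \in \Gamma$, we have $\translate{x} = x$ and \textsc{Var}$_1$ applies directly. For a literal $\natlit{n}$, rule \textsc{Lit} gives $\natlit{n} \prd \tyint$. For $t_1 \odot t_2$, the induction hypotheses give $\Gamma \vdash \translate{t_1} \prd \tyint$ and $\Gamma \vdash \translate{t_2} \prd \tyint$; after weakening by the fresh $\alpha \cnt \tyint$, rule \textsc{binop} types the statement $\odot(\translate{t_1},\translate{t_2};\alpha)$, and rule \textsc{$\mu$} abstracts it to a producer of type $\tyint$, matching $\translate{t_1 \odot t_2} = \mu\alpha.\odot(\translate{t_1},\translate{t_2};\alpha)$. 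The $\mathbf{ifz}$, let-binding, data/codata, lambda/application, and top-level-call cases all follow the same recipe: apply the IHs, weaken under the fresh covariable(s), assemble a cut or constructor/destructor/case/cocase form using the corresponding \targetlang\ rule, and close with a $\mu$-abstraction (or, for $\cocase$ and $\lambda$, directly with \textsc{Cocase-Fun} / \textsc{Cocase-Stream} / \textsc{Cocase-LPair}, since those translate to producers without a surrounding $\mu$). For the control operators, \textsc{Label} translates to $\mu\alpha.\cut{\translate{t}}{\alpha}$: the IH gives $\Gamma, \alpha \cnt \tau \vdash \translate{t} \prd \tau$, and then \textsc{Var}$_2$, \textsc{Cut}, and \textsc{$\mu$} finish it; \textsc{Goto}, which translates $\jump{t}{\alpha}$ to $\mu\beta.\cut{\translate{t}}{\alpha}$ with $\beta$ fresh, works similarly, using that the result type $\tau'$ is arbitrary exactly because $\beta$ is bound by a $\mu$ that is never used.

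The case of top-level definitions requires a small separate argument at the level of programs: I would show by induction on the derivation of $\vdash \wellformed{P}$ (using \textsc{Wf-Empty} and \textsc{Wf-Cons}) that $\vdash \wellformed{\translate{P}}$, invoking the term-level theorem on each definition body — noting that $\translate{\mathbf{def}\ f(\overline{x};\overline{\alpha})\coloneq t}$ adds a fresh return covariable $\alpha$ and uses body $\cut{\translate{t}}{\alpha}$, which is well-typed by the IH plus \textsc{Var}$_2$ and \textsc{Cut}. This justifies the implicit use of $\translate{P}$ as the program index in the statement, so that rule \textsc{Call} is applicable when translating $f(\overline{t};\overline{\alpha})$.

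I expect the only real friction to be bookkeeping around the fresh covariables: each translation clause introduces its own $\alpha$, and one must be careful that these are genuinely fresh with respect to $\Gamma$ and do not clash with covariables already occurring in $t$, so that weakening is legitimate and no capture occurs. This is routine but pervasive, and it is the step I would be most careful to state cleanly (e.g., by adopting a convention that bound covariables introduced by $\translate{-}$ are chosen globally fresh, or by working up to $\alpha$-equivalence). Everything else is a direct rule-by-rule match between the \surfacelang\ typing rules in \cref{sec:appendix:typing-surfacelang} and the \targetlang\ rules in \cref{fig:typing:targetlang}.
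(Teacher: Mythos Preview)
Your proposal is correct and follows essentially the same approach as the paper: structural induction on the typing derivation, with the control-operator cases (\textsc{Label} and \textsc{Goto}) singled out as the only ones requiring explicit care. The paper's proof is terser---it dismisses all other cases as ``straightforward'' and handles weakening only implicitly (noting ``we implicitly use weakening, which is allowed since $\beta$ is fresh'' in the \textsc{Goto} case)---whereas you spell out the weakening lemma, the routine cases, and the program-level well-formedness argument explicitly; but the underlying argument is the same.
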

\begin{proof}
  Most cases are straightforward in the proof which proceeds by a structural induction on the typing derivation.
  The interesting cases are when the typing derivation types a control operator.
  The only rule in \surfacelang\ with $\labelterm{\alpha}{t_1}$ in the conclusion is \textsc{Label}.
  This rule has the premise $\Gamma,\alpha\cnt \tau \vdash t_1 :\tau$, and applying the induction hypothesis gives $\Gamma,\alpha\cnt \tau \vdash \translate{t_1} \prd \tau$.
  Then we can derive $\translate{t} = \mu\alpha.\cut{\translate{t_1}}{\alpha} \prd \tau$:
  \begin{prooftree}
    \AxiomC{(Induction Hypothesis)}
    \noLine
    \UnaryInfC{$\Gamma,\alpha\cnt\tau \vdash \translate{t_1} \prd \tau$}
    \AxiomC{}
    \RightLabel{\textsc{Var}$_2$}
    \UnaryInfC{$\Gamma,\alpha\cnt\tau \vdash \alpha \cnt \tau$}
    \RightLabel{\textsc{Cut}}
    \BinaryInfC{$\Gamma,\alpha\cnt\tau \vdash \cut{\translate{t_1}}{\alpha}$}
    \RightLabel{\textsc{$\mu$}}
    \UnaryInfC{$\Gamma \vdash \mu\alpha.\cut{\translate{t_1}}{\alpha} \prd \tau$}
  \end{prooftree}
  The only rule with $\jump{t_1}{\alpha}$ in the conclusion is \textsc{Goto}, which has premises $\Gamma \vdash t_1 : \tau$ and $\alpha \cnt \tau \in \Gamma$.
  Applying the induction hypothesis gives $\Gamma \vdash \translate{t_1} \prd \tau$, and we can therefore type the translation of the translation as follows (where we implicitly use weakening, which is allowed since $\beta$ is fresh)
  \begin{prooftree}
    \AxiomC{(Induction Hypothesis)}
    \noLine
    \UnaryInfC{$\Gamma,\beta\cnt\sigma\vdash\translate{t_1} \prd \tau$}
    \AxiomC{$\alpha\cnt\tau\in\Gamma,\beta\cnt\sigma$}
    \RightLabel{\textsc{Var}$_2$}
    \UnaryInfC{$\Gamma,\beta\cnt\sigma\vdash\alpha\cnt\tau$}
    \RightLabel{\textsc{Cut}}
    \BinaryInfC{$\Gamma,\beta\cnt\sigma \vdash \cut{\translate{t_1}}{\alpha}$}
    \RightLabel{\textsc{$\mu$}}
    \UnaryInfC{$\Gamma\vdash\mu\beta.\cut{\translate{t_1}}{\alpha} \prd \tau$}
  \end{prooftree}
\end{proof}



\section{Insights}
\label{sec:insights}
In the previous section, we have explained \emph{what} the \lambdamumu\ is, and \emph{how} it works.
Now that we know the what and how we can explain \emph{why} this calculus is so interesting.
This section is therefore a small collection of independent insights.
To be clear, these insights are obvious to those who are deeply familiar with the \lambdamumu, but
we can still recall how surprising they were for us when we first learned about them.

\subsection{Evaluation Contexts are First Class}
\label{subsec:insigts:evaluation-context-first-class}
A central feature of the \lambdamumu\ is the treatment of evaluation contexts as first-class objects, as we have mentioned before.
For example, consider
the term $(\natlit{2}*\natlit{3})*\natlit{4}$ in \surfacelang.
When we want to evaluate this, we have to use the evaluation context $\square*\natlit{4}$ to evaluate the subterm $(\natlit{2}*\natlit{3})$ and get $\natlit{6}*\natlit{4}$ which we can then evaluate to $\natlit{24}$.
Translating this term into \targetlang\ gives $\mu\alpha.*(\mu\beta.*(\natlit{2},\natlit{3};\beta),\natlit{4};\alpha)$.
To evaluate this term, we first need to focus it giving
\begin{align*}
  \mu\alpha. \cut{\mu\beta.*(\natlit{2},\natlit{3};\beta)}{\tilde\mu x. *(x,\natlit{4};\alpha)}
\end{align*}
When we now start evaluating with $\vartop$, the steps are the same as in \surfacelang.
Using call-by-value, the $\mu$-abstraction is evaluated first, giving $*(\natlit{2},\natlit{3};*(\tilde\mu x. *(x,\natlit{4};\vartop))$.
This now has the form where the product can be evaluated to $\cut{\natlit{6}}{\tilde\mu x. *(x,\natlit{4};\vartop)}$, after which $\natlit{6}$ is substituted for $x$.
The term $*(\natlit{6},\natlit{4};\vartop)$ can then be directly evaluated to $\natlit{24}$.

After focusing, we can see how $\beta$ is a variable that stands for the evaluation context in \surfacelang.
The term $\tilde\mu x. *(x,\natlit{4};\alpha)$ is the first-class representation of the evaluation context
$\square*\natlit{4}$.
We first evaluate the subexpression $*(\natlit{2},\natlit{3};\beta)$ and then insert the result into $*(x,\natlit{4};\vartop)$ to finish the evaluation, as we did in \surfacelang.
In other words, the $\square$ of an evaluation context in \surfacelang, corresponds to a continuation $\beta$ in \targetlang, and similarly determines in which order subexpressions are evaluated.

\subsection{Data is Dual to Codata}
The sequent calculus clarifies the relation between data and codata as being exactly dual to each other.
When looking at the typing rules in \cref{fig:typing:targetlang}, we can see that data and codata
types are completely symmetric. The two are not symmetric in languages based on natural deduction: A pattern
match on data types includes the scrutinee but there is no corresponding object in the construction of
codata. Similarly, invoking a destructor $D$ of a codata type always includes the codata object $x$ to be destructed, e.g., $x.D(\ldots)$, whereas the invocation of the constructor of a data type has no corresponding object.

This asymmetry is fixed in the sequent calculus. Destructors (such as $\mathtt{fst}$) are first-class and don't require
a scrutinee, which repairs the symmetry to constructors. Similarly, pattern matches ($\SCcase{\ldots}$) do not require an object to destruct, which makes them completely symmetrical to copattern matches.
This duality reduces the conceptual complexity and opens the door towards shared design and implementation of features
of data and codata types.

\subsection{Let-Bindings are Dual to Control Operators}
\label{subsec:insights:let-vs-control}

The $\mathbf{label}$ construct in \surfacelang\ is translated to a $\mu$-binding in \targetlang.
Also, when considering the typing rule for $\labelterm{\alpha}{t}$ in \cref{subsec:typing:fun}, we can see that it directly corresponds to typing a $\mu$-binding with the label $\alpha$ being the bound covariable.
Similarly, a $\mathbf{let}$-binding is translated to a $\tilde\mu$-binding and typing a $\mathbf{let}$-binding in \surfacelang\ closely corresponds to typing a $\tilde\mu$-term in \targetlang.
This way, $\mathbf{label}$s and $\mathbf{let}$-bindings are dual to each other, the same way $\mu$ and $\tilde\mu$ are.
The duality can be extended to other control operators such as $\mathtt{call/cc}$.

As it turns out, the $\mathbf{label}$ construct is very closely related to $\mathtt{call/cc}$.
There are in fact only two differences.
First, $\labelterm{\alpha}{t}$ has the binder $\alpha$ for the continuation built into the construct, just as the variation of $\mathtt{call/cc}$ named $\mathtt{let/cc}$ (which \citet{Reynolds1972definitional} called $\mathbf{escape}$).
The second, and more important difference is that the invocation of the continuation captured by $\labelterm{\alpha}{t}$ happens through an explicit language construct $\jump{t}{\alpha}$.
This makes it easy to give a translation to \targetlang\ as we can simply insert another $\mu$-binding to discard the remaining continuation at exactly the place where the captured continuation is invoked.
In contrast, with $\mathtt{call/cc}$ and $\mathtt{let/cc}$ the continuation is applied in the same way as a normal function, making it necessary to redefine the variable the captured continuation is bound to when translating to \targetlang.
This obscures the duality to $\mathbf{let}$-bindings which is so evident for $\mathbf{label}$ and $\mathbf{goto}$.

To see this, here is a translation of $\mathtt{let/cc}~k~t$ to \targetlang
\begin{align*}
  \translate{\mathtt{let/cc}~k~t}
  &\coloneq
  \mu \alpha.\cut{\SCcocase{ \mathtt{ap}(x, \beta) \Rightarrow \cut{x}{\alpha} }}{\tilde\mu k.\cut{\translate{t}}{\alpha}}
\end{align*}
The essence of the translation still is that the current continuation is captured by the outer $\mu$ and bound to $\alpha$.
But now we also have to transform this $\alpha$ into a function (the $\mathbf{cocase}$ here) which discards its context (here bound to $\beta$) and bind this function to $k$, which is done using $\tilde\mu$.
For $\mathtt{call/cc}$, the duality is even more obscured, as there the binder for the continuation is hidden in the function which $\mathtt{call/cc}$ is applied to.
For the translation, this function must then be applied to the above $\mathbf{cocase}$ and the captured continuation $\alpha$, resulting in the following term (cf. also \cite{Miquey2019}).
\begin{align*}
  \translate{\mathtt{call/cc}~f}
  &\coloneq
  \mu \alpha.\cut{\translate{f}}{\mathtt{ap}(\SCcocase{ \mathtt{ap}(x, \beta) \Rightarrow \cut{x}{\alpha} }, \alpha)}
\end{align*}

Other control operators for undelimited continuations can be translated in a similar way.
For example, consider Felleisen's $\mathcal{C}$ \cite{Felleisen1987syntactic}.
The difference to $\mathtt{call/cc}$ is that $\mathcal{C}$ discards the current continuation if it is not invoked somewhere in the term $\mathcal{C}$ is applied to, whereas $\mathtt{call/cc}$ leaves it in place and thus behaves as a no-op if the captured continuation is never invoked.
The only change that needs to be made in the translation to \targetlang\ is that the top-level continuation $\vartop$ has to be used for the outer cut instead of using the captured continuation.
This is most easily seen for a variation of $\mathcal{C}$ which has the binder for the continuation built into the operator and where the invocation of the continuation is explicit, similar to $\mathbf{label}/\mathbf{goto}$.
Calling this variation $\mathbf{label}_{\mathcal{C}}$, we obtain the following translation
\begin{align*}
  \translate{\mathbf{label}_{\mathcal{C}}\ \alpha\ \{ t \}}
  &\coloneq
  \mu \alpha.\cut{\translate{t}}{\vartop}
\end{align*}
Here the duality to $\mathbf{let}$-bindings is evident again.
The translation for $\mathcal{C}$ itself is then obtained in the same way as for $\mathtt{call/cc}$
\begin{align*}
  \translate{\mathcal{C}~f}
  &\coloneq
  \mu \alpha.\cut{\translate{f}}{\mathtt{ap}(\SCcocase{ \mathtt{ap}(x, \beta) \Rightarrow \cut{x}{\alpha} }, \vartop)}
\end{align*}

\subsection{The Case-of-Case Transformation}
\label{subsec:insights:case-of-case}

One important transformation in functional compilers is the case-of-case transformation.
\citet{Maurer2017} give the following example of this transformation.
The term
\begin{equation*}
    \ite{(\ite{e_1}{e_2}{e_3})}{e_4}{e_5}
\end{equation*}
can be replaced by the term
\begin{equation*}
    \ite{e_1}{(\ite{e_2}{e_4}{e_5})}{(\ite{e_3}{e_4}{e_5})}.
\end{equation*}
Logicians call these kinds of transformations \emph{commutative conversions}, and they play an important role in the study of the sequent calculus.
But as \citet{Maurer2017} show, they are also important for compiler writers who want to generate efficient code.

In the \lambdamumu, commuting conversions don't have to be implemented as a special compiler pass.
They fall out \emph{for free} as a special instance of $\mu$-reductions!
Let us illustrate this point by translating \citeauthor{Maurer2017}'s example into the \lambdamumu.
First, let us translate the two examples using pattern-matching syntax:
\begin{align*}
    \case{(\case{e_1}{\mathtt{T} \Rightarrow e_2; \mathtt{F} \Rightarrow e_3})}{\mathtt{T} \Rightarrow e_4; \mathtt{F} \Rightarrow e_5} \\
    \case{e_1}{\mathtt{T} \Rightarrow \case{e_2}{\mathtt{T} \Rightarrow e_4; \mathtt{F} \Rightarrow e_5}; \mathtt{F} \Rightarrow \case{e_3}{\mathtt{T} \Rightarrow e_4; \mathtt{F} \Rightarrow e_5}}
\end{align*}
Let us now translate these two terms into the \lambdamumu:
\begin{align*}
  &\mu \alpha.\underline{\cut{\mu \beta.\cut{\translate{e_1}}{\SCcase{\mathtt{T} \Rightarrow \cut{\translate{e_2}}{\beta}; \mathtt{F} \Rightarrow \cut{e_3}{\beta}}}}{\SCcase{\mathtt{T} \Rightarrow \cut{\translate{e_4}}{\alpha}, \mathtt{F} \Rightarrow \cut{\translate{e_5}}{\alpha}}}} \\
  & \mu \alpha. \langle \translate{e_1} | \mathbf{case}\ \{ \\
  & \qquad \mathtt{T} \Rightarrow \underline{\cut{\mu \beta.\cut{\translate{e_2}}{\SCcase{\mathtt{T} \Rightarrow \cut{\translate{e_4}}{\beta},\mathtt{F} \Rightarrow \cut{\translate{e_5}}{\beta}}}}{\alpha}} \\
  & \qquad \mathtt{F} \Rightarrow \underline{\cut{\mu \beta.\cut{\translate{e_3}}{\SCcase{\mathtt{T} \Rightarrow \cut{\translate{e_4}}{\beta},\mathtt{F} \Rightarrow \cut{\translate{e_5}}{\beta}}}}{\alpha}} \} \rangle \\
\end{align*}
We can see that just by reducing all of the underlined redexes we reduce both of these examples to the same term.

\subsection{Direct and Indirect Consumers}
\label{subsec:insights:direct-indirect-consumers}
As mentioned in the introduction, a natural competitor of sequent calculus as an intermediate representation is continuation-passing style (CPS).
In CPS, reified evaluation contexts are represented by functions.
This makes the resulting types of programs in CPS arguably harder to understand.
There is, however, another advantage of sequent calculus over CPS as described by \citet{Downen2016sequent}.
The first-class representation of consumers in sequent calculus allows us to distinguish between two different kinds of consumers: direct consumers, i.e., destructors, and indirect consumers.
In particular, this allows to chain direct consumers in \targetlang\ in a similar way as in \surfacelang.

Suppose we have a codata type with destructors $\mathtt{get}$ and $\mathtt{set}$ for getting and setting the value of a reference.
Now consider the following chain of destructor calls on a reference $r$ in \surfacelang
\begin{align*}
  r.\mathtt{set}(3).\mathtt{set}(4).\mathtt{get}()
\end{align*}
A compiler could use a user-defined custom rewrite rule for rewriting two subsequent calls to $\mathtt{set}$ into only the second call.
In \targetlang\ the above example looks as follows:
\begin{align*}
  \mu \alpha. \cut{r}{\mathtt{set}(3; \mathtt{set}(4; \mathtt{get}(\alpha)}
\end{align*}
We still can immediately see the direct chaining of destructors and thus apply essentially the same rewrite rule.
In CPS, however, the example would rather become
\begin{align*}
  \lambda k.\ r.\mathtt{set}(3; \lambda s.\ s.\mathtt{set}(4; \lambda t.\ t.\mathtt{get}(k)))
\end{align*}
The chaining of the destructors becomes obfuscated by the indirections introduced by representing the continuations for each destructor as a function.
To apply the custom rewrite rule mentioned above, it is necessary to see through the lambdas, i.e. the custom rewrite rule has to be transformed to be applicable.

\subsection{Call-By-Value, Call-By-Name and Eta-Laws}
\label{subsec:insights:strict-vs-lazy}

In \cref{subsec:syntax-let-bindings} we already pointed out the existence of statements $\cut{\mu \alpha.s_1}{\tilde\mu x.s_2}$ which are called \emph{critical pairs} because they can a priori be reduced to either $s_1[\tilde\mu x.s_2/\alpha]$ or $s_2[\mu \alpha.s_1/x]$.
These critical pairs were already discussed by \citet{Curien2000duality} when they introduced the \lambdamumu.
One solution is to pick an evaluation order, either call-by-value (cbv) or call-by-name (cbn), that determines to which of the two statements we should evaluate, and in this paper we chose to always use the call-by-value evaluation order.
The difference between these two choices has also been discussed by \citet{Wadler2003call}.
Note that this freedom for the evaluation strategy is another advantage of sequent calculus over continuation-passing style, as the latter always fixes an evaluation strategy.

Which evaluation order we choose has an important consequence for the optimizations we are allowed to perform in the compiler.
If we choose call-by-value, then we are not allowed to use all $\eta$-equalities for codata types, and if we use call-by-name, then we are not allowed to use all $\eta$-equalities for data types.
Let us illustrate the problem in the case of codata types with the following example:
\[
    \cut{\SCcocase{\mathtt{ap}(x;\alpha)\Rightarrow \cut{\mu\beta.s_1}{\mathtt{ap}(x;\alpha)}}}{\tilde\mu x.s_2}
    \equiv_{\eta} \cut{\mu\beta.s_1}{\tilde\mu x.s_2}
\]
We assume that $x$ and $\alpha$ do not appear free in $s_1$.
The $\eta$-transformation is just the ordinary $\eta$-law for functions but applied to the representation of functions as codata types.
The statement on the left-hand side reduces the $\tilde\mu$ first under both call-by-value and call-by-name evaluation order, i.e.
\[
    \cut{\SCcocase{\mathtt{ap}(x;\alpha)\Rightarrow \cut{\mu\beta.s_1}{\mathtt{ap}(x;\alpha)}}}{\tilde\mu x.s_2}
    \begin{array}{l r}
      \reducesto_{\text{cbv}} & s_2[\SCcocase{\dots}/x] \\
      \reducesto_{\text{cbn}} & s_2[\SCcocase{\dots}/x]
    \end{array}
\]
The right-hand side of the $\eta$-equality, however, reduces the $\mu$ first under call-by-value evaluation order, i.e.
\[
    \cut{\mu\beta.s_1}{\tilde\mu x.s_2}
    \begin{array}{l r}
      \reducesto_{\text{cbv}} & s_1[\tilde\mu x.s_2/\beta] \\
      \reducesto_{\text{cbn}} & s_2[\mu\beta. s_1/x]
    \end{array}
\]
Therefore, the $\eta$-equality is only valid under call-by-name evaluation order.
This example shows that the validity of applying this $\eta$-rule as an optimization depends on whether the language uses call-by-value or call-by-name.
If we instead used a data type such as $\mathtt{Pair}$, a similar $\eta$-reduction would only give the same result as the original statement when using call-by-value.

\subsection{Linear Logic and the Duality of Exceptions}
\label{subsec:related-work:linear-logic}

We have introduced the data type $\mathtt{Pair}(\sigma,\tau)$ and the codata type $\mathtt{LPair}(\sigma,\tau)$ as two different ways to formalize tuples.
The data type $\mathtt{Pair}(\sigma,\tau)$ is defined by the constructor $\mathtt{Tup}$ whose arguments are evaluated eagerly, so this type corresponds to strict tuples in languages like ML or OCaml.
The codata type $\mathtt{LPair}(\sigma,\tau)$ is a lazy pair which is defined by its two projections fst and snd, and only when we invoke the first or second projection do we start to compute its contents.
This is closer to how tuples behave in a lazy language like Haskell.

Linear logic \cite{Girard1987,Wadler1990lineartypes} adds another difference to these types.
In linear logic we consider arguments as resources which cannot be arbitrarily duplicated or discarded; every argument to a function has to be used exactly once.
If we follow this stricter discipline, then we have to distinguish between two different types of pairs: In order to use a pair $\sigma \otimes \tau$ (pronounced \enquote{times} or \enquote{tensor}), we have to use both the $\sigma$ and the $\tau$, but if we want to use a pair $\sigma \with \tau$ (pronounced \enquote{with}), we must choose to either use the $\sigma$ or the $\tau$.
It is now easy to see that the type $\sigma \otimes \tau$ from linear logic corresponds to the data type $\mathtt{Pair}(\sigma,\tau)$, since when we pattern match on this type we get \emph{two} variables in the context, one for $\sigma$ and one for $\tau$.
The type $\sigma \with \tau$ similarly corresponds to the type $\mathtt{LPair}(\sigma,\tau)$ which we use by invoking either the first or the second projection, consuming the whole pair.

In addition to these two different kinds of conjunction, we also have two different kinds of disjunction.
These two disjunctions are written $\sigma \oplus \tau$ (pronounced \enquote{plus}) and $\sigma \parr \tau$ (pronounced \enquote{par}) and correspond to two different ways to handle errors in programming languages.
Their typing rules in \targetlang\ are:

\begin{minipage}{0.45\textwidth}
  \begin{prooftree}
    \AxiomC{$\Gamma \vdash t \prd \sigma$}
    \UnaryInfC{$\Gamma \vdash \text{Inl}(t) \prd \sigma \oplus \tau$}
  \end{prooftree}
\end{minipage}
\begin{minipage}{0.45\textwidth}
  \begin{prooftree}
    \AxiomC{$\Gamma \vdash t \prd \tau$}
    \UnaryInfC{$\Gamma \vdash \text{Inr}(t) \prd \sigma \oplus \tau$}
  \end{prooftree}
\end{minipage}

\begin{prooftree}
  \AxiomC{$\Gamma, x \prd \sigma \vdash s_1$}
  \AxiomC{$\Gamma, y \prd \tau \vdash s_2$}
  \BinaryInfC{$\Gamma \vdash \SCcase{\text{Inl}(x) \Rightarrow s_1, \text{Inr}(y) \Rightarrow s_2} \cnt \sigma \oplus \tau$}
\end{prooftree}

\begin{minipage}{0.45\textwidth}
  \begin{prooftree}
    \AxiomC{$\Gamma \vdash c_1 \cnt \sigma$}
    \AxiomC{$\Gamma \vdash c_2 \cnt \tau$}
    \BinaryInfC{$\Gamma \vdash \text{Par}(c1,c2) \cnt \sigma \parr \tau$}
  \end{prooftree}
\end{minipage}
\begin{minipage}{0.45\textwidth}
  \begin{prooftree}
    \AxiomC{$\Gamma, \alpha \cnt \sigma, \beta \cnt \tau \vdash s$}
    \UnaryInfC{$\Gamma \vdash  \SCcocase{ \text{Par}(\alpha,\beta) \Rightarrow s } \prd \sigma \parr \tau$}
  \end{prooftree}
\end{minipage}
\smallskip

Languages like Rust and Haskell use $\sigma \oplus \tau$ for error handling, which corresponds to the \enquote{Either} and \enquote{Result} types in those languages.
This corresponds to the calling convention that the function returns a tagged result which indicates whether an error has occurred or not, and the caller of the function has to check this tag.
The type $\sigma \parr \tau$ behaves differently: A function which returns a value of type $\sigma \parr \tau$ has to be called with two continuations, one for the possibility that the function returns successfully and one for the possibility that the function throws an error.
And the function itself decides which continuation to call, so there is no overhead for checking the result of a function call.
This is quite similar to how some functions in Javascript are called with an \enquote{onSuccess} continuation and an \enquote{onFailure} continuation and different to the exception model of, e.g., Java, where the exception handler is dynamically scoped instead of lexically passed as an argument.
This duality between the two different ways of handling exceptions can be seen most clearly in the sequent calculus; more details on this duality can be found in section 3.4 of \cite{Spiwack2014} or in section 7.1 of \cite{icfp2022}.

\section{Related Work}
\label{sec:related-work}
The central ideas of the calculi that we have presented in this pearl are not novel: the \lambdamumu\ is by now over 20 years old.
We chose a variant of this calculus that can be used as a starting point to explore all the variants that have been described in the literature.
This related work section is therefore intended to provide suggestions for further reading and the chance to dive deeper into specific topics that we have only touched upon.

\subsection{The Sequent Calculus}
\label{subsec:related-work:sequent-calculus}
The basis of our language \targetlang\ is a term assignment system for the sequent calculus, an alternative logical system to natural deduction.
The sequent calculus was originally introduced by Gentzen in the articles \citet{Gentzen1935a,Gentzen1935b,Gentzen1969}. 
For a more thorough introduction to the sequent calculus as a logical system, we can recommend the books by \citet{Negri2001structural} and \citet{TroelstraSchwichtenberg2000} which introduce the sequent calculus and show how it differs from the natural deduction systems that are more commonly taught.

\subsection{Term Assignment for the Sequent Calculus}
\label{subsec:related-work:term-assignment}
The original article which introduced the \lambdamumu\ as a term assignment system for the sequent calculus was by \citet{Curien2000duality}.
Before we list some of the other articles, we should preface them with the following remark on notation:

\begin{remark}[Alternative Notation]
  Our notation for producers, consumers and statements follows the established conventions in the literature.
  However, we diverge in the way that we write typing judgments from the example of \citet{Curien2000duality} which is followed by most other authors.
  We use one typing context $\Gamma$ which binds both variables $x \prd \tau$ and covariables $\alpha \cnt \tau$, whereas \citet{Curien2000duality} use two contexts; a context $\Gamma$ which contains bindings for all variables and a context $\Delta$ which contains the bindings for all covariables.
  The following table summarizes the difference between their notation and the notation used in our paper.
  \medskip

  \begin{center}
    \begin{tabular}{ccc}
      \toprule
      Judgment Form & Our notation & \citet{Curien2000duality} \\
      \midrule
      Typing Producers & $\Gamma \vdash p \prd \tau$ & $\Gamma \vdash p : \tau \mid \Delta$ \\
      Typing Consumers & $\Gamma \vdash c \cnt \tau$ & $\Gamma \mid c : \tau \vdash \Delta$ \\
      Typing Statements & $\Gamma \vdash s$ & $s : (\Gamma \vdash \Delta)$ \\
      \bottomrule
    \end{tabular}
  \end{center}
  \medskip

  The reasons for this divergence are easily explained.
  The notation of \citet{Curien2000duality} with its two contexts $\Gamma$ and $\Delta$ perfectly illustrates the correspondence to the sequent calculus which operates with sequents $\Gamma \vdash \Delta$ which contain multiple formulas on the left- and right-hand side of the turnstile.
  This close correspondence to the sequent calculus is less important for us.
  We found that splitting the context in this way often makes it more difficult to write down rules in their full generality when we extend the language with other features.
  Features which introduce a dependency of later bindings on earlier bindings within a typing context, for example when we add parametric polymorphism, don't fit easily into the format of \citet{Curien2000duality}.
\end{remark}

With these remarks out of the way, we can recommend the articles by \citet{Zeilberger2008unity}, \citet{Downen2014duality, Downen2018, Downen2020compiling}, \citet{MunchMaccagnoni2009} and \citet{Spiwack2014} which were very helpful to us when we learned about the \lambdamumu.
\subsection{Codata Types}
\label{subsec:related-work:codata-types}

Codata types were originally invented by \citet{Hagino1989codatatypes}.
They had the most success in proof assistants such as Agda where they help circumvent certain technical problems that arise when we try to model coinductive types.
Copattern matching as a way to create producers of codata types was popularized by \citet{Abel2013copatterns}, although the basic idea of the concept had been around before that, see, e.g., \cite{Zeilberger2008unity}.
But probably the best starting point to learn more about codata types is an article written by \citet{Downen2019codata}.

\subsection{Control Operators and Classical Logic}
\label{subsec:related-work:control-operators}

The $\mathbf{label}/\mathbf{goto}$ construct that we are using in \surfacelang\ is an example of a control operator, of which Landin's operator J \cite{Thielecke1998,Felleisen1987,Landin1965} likely is the oldest.
Their translation into \targetlang\ uses $\mu$-abstractions, which are also a form of control operator that was originally introduced by \citet{Parigot1992} before it became a part of the \lambdamumu\ of \citet{Curien2000duality}.
Control operators have an important relationship to classical logic via the Curry-Howard isomorphism.
This relationship was discovered by \citet{Griffin1989formulae}; a more thorough introduction can be found in \citet{SorensenUrzyczyn2006}.

\subsection{Different Evaluation Orders}
\label{subsec:related-work:evaluation-orders}

We have already talked about the evaluation strategies call-by-value and call-by-name, and how their difference can be explained by different choices of how a critical pair should be evaluated.
This duality between call-by-value and call-by-name has already been observed by \citet{Filinski1989} and has been explored in more detail by \citet{Wadler2003call, Wadler2005}.
We have also seen in \cref{subsec:insights:strict-vs-lazy} how $\eta$-reduction only works with data types in call-by-value and with codata types in call-by-name.
A lot of people therefore conclude that the choice of an evaluation order should maybe not be a global decision, but should instead depend on the type.
This approach requires tracking the polarity of types and providing additional shift connectives which help mediate between the different evaluation orders; the article by \citet{Downen2018beyondPolarity} is a good entry point for pursuing these kinds of questions which are discussed in detail in \cite{Zeilberger2009} and \cite{Munch2013phd}.
A well-known example of mixing evaluation orders is the call-by-push-value paradigm \cite{Levy1999} which distinguishes value types and computation types and subsumes both call-by-value and call-by-name.

\section{Conclusion}
\label{sec:conclusion}
In this functional pearl, we have presented the \lambdamumu\ in the way we introduce it to our colleagues and students on the whiteboard; by compiling small examples of functional programs.
We think this is a better way to introduce programming-language enthusiasts and compiler writers to the \lambdamumu, since it doesn't require prior knowledge of the sequent calculus.
We have also shown \emph{why} we are excited about this calculus, by giving examples of how it allows us to express aspects like strict vs. lazy evaluation or compiler optimizations like case-of-case in an extremely clear way.
We want to share our enthusiasm for the sequent calculus and languages built on it with more people, and with this pearl, we hope that others will start to write their own little compilers to the sequent calculus and explore the exciting possibilities it offers.

\section*{Data Availability Statement}
This paper is accompanied by an implementation in Haskell.
Upon acceptance and publication of this article, the implementation will be made available permanently using Zenodo.

\section*{Acknowledgments}

We would like to thank the anonymous reviewers and Paul Downen for their feedback which helped us greatly in improving the final presentation of the paper.

\appendix

\section{The Relationship to the Sequent Calculus}
\label{sec:appendix:sequent-calculus-relationship}
In the main part of the paper we introduced the \lambdamumu\ without any references to the sequent calculus, because we think it is not essential to understand the latter in order to understand the former.
In this appendix, we provide the details which help make the connection between the logical calculus and the term system clear.
We only discuss a very simple sequent calculus which contains two logical connectives: the two conjunctions $A \otimes B$ and $A \with B$ which correspond to the strict and lazy pairs that we have seen in \targetlang.
We use $X$ for propositional variables.
\begin{equation*}
    A, B \Coloneqq X \mid A \otimes B \mid A \with B
\end{equation*}
In the (classical) sequent calculus both the premisses and the conclusion of a derivation rule consist of \emph{sequents} $\Gamma \vdash \Delta$.
Both $\Gamma$ and $\Delta$ are multisets of formulas; that is, it is important how often a formula occurs on the left or the right, but not in which order the formulas occur.
In the sequent calculus, we only have \emph{introduction rules}.
This means that the logically complex formula $A \otimes B$ or $A \with B$ only occurs in the conclusion of the rules that define it, and not in one of the premises.
Every connective comes with a set of rules which introduce the connective on the left and the right of the turnstile.
In our case, the rules look like this:

\begin{minipage}{0.45\textwidth}
    \begin{prooftree}
        \AxiomC{}
        \RightLabel{\textsc{Axiom}}
        \UnaryInfC{$A \vdash A$}
    \end{prooftree}
\end{minipage}
\begin{minipage}{0.45\textwidth}
    \begin{prooftree}
        \AxiomC{$\Gamma_1 \vdash \Delta_1, A$}
        \AxiomC{$A, \Gamma_2 \vdash \Delta_2$}
        \RightLabel{\textsc{Cut}}
        \BinaryInfC{$\Gamma_1, \Gamma_2 \vdash \Delta_1, \Delta_2$}
    \end{prooftree}
\end{minipage}
\vspace{0.5cm}

\begin{minipage}{0.45\textwidth}
    \begin{prooftree}
        \AxiomC{$\Gamma, A_1, A_2 \vdash \Delta$}
        \RightLabel{$\otimes$-L}
        \UnaryInfC{$\Gamma, A_1 \otimes A_2 \vdash \Delta$}
    \end{prooftree}
\end{minipage}
\begin{minipage}{0.45\textwidth}
    \begin{prooftree}
        \AxiomC{$\Gamma_1 \vdash A_1, \Delta_1$}
        \AxiomC{$\Gamma_2 \vdash A_2, \Delta_2$}
        \RightLabel{$\otimes$-R}
        \BinaryInfC{$\Gamma_1, \Gamma_2 \vdash A_1 \otimes A_2, \Delta_1, \Delta_2$}
    \end{prooftree}
\end{minipage}
\vspace{0.5cm}

\begin{minipage}{0.3\textwidth}
    \begin{prooftree}
        \AxiomC{$\Gamma, A_1 \vdash \Delta$}
        \RightLabel{$\with$-L$_1$}
        \UnaryInfC{$\Gamma, A_1 \with A_2 \vdash \Delta$}
    \end{prooftree}
\end{minipage}
\begin{minipage}{0.3\textwidth}
    \begin{prooftree}
        \AxiomC{$\Gamma, A_2 \vdash \Delta$}
        \RightLabel{$\with$-L$_2$}
        \UnaryInfC{$\Gamma, A_1 \with A_2 \vdash \Delta$}
    \end{prooftree}
\end{minipage}
\begin{minipage}{0.3\textwidth}
    \begin{prooftree}
        \AxiomC{$\Gamma \vdash A_1, \Delta$}
        \AxiomC{$\Gamma \vdash A_2, \Delta$}
        \RightLabel{$\with$-R$_2$}
        \BinaryInfC{$\Gamma \vdash A_1 \with A_2, \Delta$}
    \end{prooftree}
\end{minipage}
\vspace{0.2cm}

The rule \textsc{Cut} is the only rule which destroys the so-called \emph{subformula property}.
This property says that every formula which occurs anywhere in a derivation is a subformula of a formula occurring in the conclusion of the derivation.
Proof theorists therefore try to show that we can \emph{eliminate the cuts}; if every sequent which can be derived using the \textsc{Cut} rule can also be derived without using it, we say that the calculus enjoys the \emph{cut-elimination property}.
The Curry-Howard correspondence for the sequent calculus relates this cut-elimination procedure to the computations that we have seen in the paper.

The first step from the sequent calculus towards the \lambdamumu\ consists in marking at most one of the formulas in each of the sequents as \emph{active}.
We mark a formula as active by enclosing it in a pair of brackets.
This yields two versions of the rule \textsc{Axiom}, one where we mark the formula on the left and one where we mark the formula on the right.
If we want to translate every derivation using the original rules to a derivation in the new variant we also have to add special rules which \emph{activate} and \emph{deactivate} formulas both on the left and on the right.
This yields the following new set of rules:

\begin{minipage}{0.25\textwidth}
    \begin{prooftree}
        \AxiomC{}
        \RightLabel{\textsc{Axiom-L}}
        \UnaryInfC{$[A] \vdash A$}
    \end{prooftree}
\end{minipage}
\begin{minipage}{0.25\textwidth}
    \begin{prooftree}
        \AxiomC{}
        \RightLabel{\textsc{Axiom-R}}
        \UnaryInfC{$A \vdash [A]$}
    \end{prooftree}
\end{minipage}
\begin{minipage}{0.4\textwidth}
    \begin{prooftree}
        \AxiomC{$\Gamma_1 \vdash \Delta_1, [A]$}
        \AxiomC{$[A], \Gamma_2 \vdash \Delta_2$}
        \RightLabel{\textsc{Cut}}
        \BinaryInfC{$\Gamma_1, \Gamma_2 \vdash \Delta_1, \Delta_2$}
    \end{prooftree}
\end{minipage}
\vspace{0.5cm}

\begin{minipage}{0.45\textwidth}
    \begin{prooftree}
        \AxiomC{$\Gamma_, A_1, A_2 \vdash \Delta$}
        \RightLabel{$\otimes$-L}
        \UnaryInfC{$\Gamma_, [A_1 \otimes A_2] \vdash \Delta$}
    \end{prooftree}
\end{minipage}
\begin{minipage}{0.45\textwidth}
    \begin{prooftree}
        \AxiomC{$\Gamma_1 \vdash [A_1], \Delta_1$}
        \AxiomC{$\Gamma_2 \vdash [A_2], \Delta_2$}
        \RightLabel{$\otimes$-R}
        \BinaryInfC{$\Gamma_1, \Gamma_2 \vdash [A_1 \otimes A_2], \Delta_1, \Delta_2$}
    \end{prooftree}
\end{minipage}
\vspace{0.5cm}

\begin{minipage}{0.3\textwidth}
    \begin{prooftree}
        \AxiomC{$\Gamma, [A_1] \vdash \Delta$}
        \RightLabel{$\with$-L$_1$}
        \UnaryInfC{$\Gamma, [A_1 \with A_2] \vdash \Delta$}
    \end{prooftree}
\end{minipage}
\begin{minipage}{0.3\textwidth}
    \begin{prooftree}
        \AxiomC{$\Gamma, [A_2] \vdash \Delta$}
        \RightLabel{$\with$-L$_2$}
        \UnaryInfC{$\Gamma, [A_1 \with A_2] \vdash \Delta$}
    \end{prooftree}
\end{minipage}
\begin{minipage}{0.3\textwidth}
    \begin{prooftree}
        \AxiomC{$\Gamma \vdash A_1, \Delta$}
        \AxiomC{$\Gamma \vdash A_2, \Delta$}
        \RightLabel{$\with$-R$_2$}
        \BinaryInfC{$\Gamma \vdash [A_1 \with A_2], \Delta$}
    \end{prooftree}
\end{minipage}
\vspace{0.5cm}

\begin{minipage}{0.23\textwidth}
    \begin{prooftree}
        \AxiomC{$\Gamma, A\vdash \Delta$}
        \RightLabel{\textsc{Act-L}}
        \UnaryInfC{$\Gamma, [A] \vdash \Delta$}
    \end{prooftree}    
\end{minipage}
\begin{minipage}{0.23\textwidth}
    \begin{prooftree}
        \AxiomC{$\Gamma, [A]\vdash \Delta$}
        \RightLabel{\textsc{Deact-L}}
        \UnaryInfC{$\Gamma, A \vdash \Delta$}
    \end{prooftree}    
\end{minipage}
\begin{minipage}{0.23\textwidth}
    \begin{prooftree}
        \AxiomC{$\Gamma \vdash A, \Delta$}
        \RightLabel{\textsc{Act-R}}
        \UnaryInfC{$\Gamma \vdash [A], \Delta$}
    \end{prooftree}    
\end{minipage}
\begin{minipage}{0.23\textwidth}
    \begin{prooftree}
        \AxiomC{$\Gamma \vdash [A], \Delta$}
        \RightLabel{\textsc{Deact-R}}
        \UnaryInfC{$\Gamma \vdash A, \Delta$}
    \end{prooftree}    
\end{minipage}

\vspace{0.2cm}

We can now begin to assign terms to derivations in this calculus by associating every non-active formula $A$ in the \emph{left} side of the turnstile with a producer variable $x \prd A$, and every non-active formula $B$ on the \emph{right} side of the turnstile with a consumer variable $\alpha \cnt B$.
As discussed in \cref{subsec:related-work:term-assignment} we write both producer and consumer variables in a joint context $\Gamma$ on the left-hand side of typing rules.
We have to distinguish three different sequents, depending on whether a formula is active, and if so, on which side the active formula occurs.
If there is no active formula, then we assign a \emph{statement} to the sequent, if the formula on the right is active, we assign a \emph{producer}, and if the formula on the left is active, we assign a \emph{consumer}.
For most rules the correspondence is clear: The rule \textsc{Axiom-R} corresponds to the typing rule \textsc{Var}$_1$ (and \textsc{Axiom-L} to \textsc{Var}$_2$).
The rule \textsc{Tup} corresponds to the rule $\otimes$-R, and \textsc{Case-Pair} to $\otimes$-L.
The rules \textsc{Fst} and \textsc{Snd} correspond to the rules $\with$-L$_1$ and $\with$-L$_2$, and \textsc{Cocase-LPair} to $\with$-R.
The activation rules correspond to the rules $\mu$ and $\tilde\mu$, and deactivation can be expressed as a cut with a variable.

\section{Typing Rules for \surfacelang}
\label{sec:appendix:typing-surfacelang}
Given a term $t$, an environment $\Gamma$ and a program $P$, if $t$ has type $\tau$ in environment $\Gamma$ and program $P$, we write $\Gamma \xvdash{P} t:\tau$.
As $P$ is only used for typing calls to top-level definitions (rule \textsc{Call}), we usually leave it implicit in the typing rules.
To make sure programs $P$ are well-formed, we have additional checking rules for programs \textsc{$\emptyset$-ok} and \textsc{P-Ok}.
If a program is well-formed, we write $\vdash P\textsc{ Ok}$.\\
\begin{minipage}{\textwidth}
  \begin{minipage}{0.3\textwidth}
    \begin{prooftree}
      \AxiomC{$x \prd \tau \in \Gamma$}
      \RightLabel{\textsc{Var}}
      \UnaryInfC{$\Gamma \vdash x : \tau$}
    \end{prooftree}
  \end{minipage}
  \hfill
  \begin{minipage}{0.3\textwidth}
    \begin{prooftree}
      \AxiomC{\phantom{$x : \tau \in \Gamma$}}
      \RightLabel{\textsc{Lit}}
      \UnaryInfC{$\Gamma \vdash \natlit{n}:\tyint$}
    \end{prooftree}
  \end{minipage}
  \hfill
  \begin{minipage}{0.3\textwidth}
    \begin{prooftree}
      \AxiomC{$\Gamma \vdash t_1: \tyint \quad \Gamma \vdash t_2: \tyint$}
      \RightLabel{\textsc{Op}}
      \UnaryInfC{$\Gamma \vdash t_1\odot t_2 : \tyint$}
    \end{prooftree}
  \end{minipage}
  \hfill
  \vspace{1em}

  \begin{minipage}{0.55\textwidth}
    \begin{prooftree}
      \AxiomC{$\Gamma \vdash n : \tyint$}
      \AxiomC{$\Gamma \vdash t_1 : \tau$}
      \AxiomC{$\Gamma \vdash t_2 : \tau$}
      \RightLabel{\textsc{Ifz}}
      \TrinaryInfC{$\Gamma \vdash \ifzero{n}{t_1}{t_2} : \tau$}
    \end{prooftree}
  \end{minipage}
  \hfill
  \begin{minipage}{0.4\textwidth}
    \begin{prooftree}
      \AxiomC{$\Gamma \vdash t_1 : \tau_1$}
      \AxiomC{$\Gamma, x \prd \tau_1 \vdash t_2 : \tau_2$}
      \RightLabel{\textsc{Let}}
      \BinaryInfC{$\Gamma \vdash \letin{x}{t_1}{t_2} :\tau_2$}
    \end{prooftree}
  \end{minipage}
  \hfill
  \vspace{1em}

  \begin{minipage}{\textwidth}
    \begin{prooftree}
      \AxiomC{$\mathbf{def}\ f(\overline{x_i \prd \tau_i};\overline{\alpha_j\cnt\tau_j}) :\tau \in P$}
      \AxiomC{$\overline{\Gamma \vdash t_i : \tau_i}$}
      \AxiomC{$\overline{\Gamma \vdash \alpha_j \cnt \tau_j}$}
      \RightLabel{\textsc{Call}}
      \TrinaryInfC{$\Gamma \xvdash{P} f(\overline{t_i};\overline{\alpha_j}) : \tau$}
    \end{prooftree}
  \end{minipage}
  \hfill
  \vspace{1em}

  \begin{minipage}{\textwidth}
    \begin{prooftree}
      \AxiomC{$\Gamma \vdash t:\mathtt{List}(\tau^{\prime})$}
      \AxiomC{$\Gamma\vdash t_1:\tau$}
      \AxiomC{$\Gamma,y\prd\tau^{\prime},z\prd\mathtt{List}(\tau^{\prime})\vdash t_2:\tau$}
      \RightLabel{\textsc{Case-List}}
      \TrinaryInfC{$\Gamma \vdash \case{t}{\mathtt{Nil}\Rightarrow t_1,\mathtt{Cons}(y,z) \Rightarrow t_2 } : \tau$}
    \end{prooftree}
  \end{minipage}
  \hfill
  \vspace{1em}

  \begin{minipage}{0.55\textwidth}
    \begin{prooftree}
      \AxiomC{$\Gamma \vdash t_1 : \tau$}
      \AxiomC{$\Gamma \vdash t_2 : \mathtt{List}(\tau)$}
      \RightLabel{\textsc{Cons}}
      \BinaryInfC{$\Gamma \vdash \mathtt{Cons}(t_1,t_2):\mathtt{List}(\tau)$}
    \end{prooftree}
  \end{minipage}
  \hfill
  \begin{minipage}{0.4\textwidth}
    \begin{prooftree}
      \AxiomC{\phantom{$\Gamma \vdash t_1 : \tau$}}
      \RightLabel{\textsc{Nil}}
      \UnaryInfC{$\Gamma \vdash \mathtt{Nil}:\mathtt{List}(\tau)$}
    \end{prooftree}
  \end{minipage}
  \hfill
  \vspace{1em}

  \begin{minipage}{0.6\textwidth}
    \begin{prooftree}
      \AxiomC{$\Gamma \vdash t:\mathtt{Pair}(\tau_1,\tau_2) \quad \Gamma,x\prd\tau_1,y\prd\tau_2 \vdash t:\tau$}
      \RightLabel{\textsc{Case-Pair}}
      \UnaryInfC{$\Gamma\vdash \case{t}{\mathtt{Tup}(x,y) \Rightarrow t}:\tau$}
    \end{prooftree}
  \end{minipage}
  \hfill
  \begin{minipage}{0.35\textwidth}
    \begin{prooftree}
      \AxiomC{$\Gamma \vdash t_1 : \tau_1 \quad \Gamma \vdash t_2 : \tau_2$}
      \RightLabel{\textsc{Tup}}
      \UnaryInfC{$\Gamma \vdash \mathtt{Tup}(t_1,t_2) : \mathtt{Pair}(\tau_1,\tau_2)$}
    \end{prooftree}
  \end{minipage}
  \hfill
  \vspace{1em}

  \begin{minipage}{0.55\textwidth}
    \begin{prooftree}
      \AxiomC{$\Gamma\vdash t:\mathtt{Stream}(\tau)$}
      \RightLabel{\textsc{Hd}}
      \UnaryInfC{$\Gamma \vdash t.\mathtt{hd} : \tau$}
    \end{prooftree}
  \end{minipage}
  \hfill
  \begin{minipage}{0.4\textwidth}
    \begin{prooftree}
      \AxiomC{$\Gamma \vdash t:\mathtt{Stream}(\tau)$}
      \RightLabel{\textsc{Tl}}
      \UnaryInfC{$\Gamma \vdash t.\mathtt{tl} : \mathtt{Stream}(\tau)$}
    \end{prooftree}
  \end{minipage}
  \hfill
  \vspace{1em}

  \begin{minipage}{\textwidth}
    \begin{prooftree}
      \AxiomC{$\Gamma\vdash t_1:\tau$}
      \AxiomC{$\Gamma\vdash t_2:\mathtt{Stream}(\tau)$}
      \RightLabel{\textsc{Stream}}
      \BinaryInfC{$\Gamma \vdash \cocase{\mathtt{hd}\Rightarrow t_1, \mathtt{tl}\Rightarrow t_2} : \mathtt{Stream}(\tau)$}
    \end{prooftree}
  \end{minipage}
  \hfill
  \vspace{1em}

  \begin{minipage}{0.55\textwidth}
    \begin{prooftree}
      \AxiomC{$\Gamma\vdash t:\mathtt{LPair}(\tau_1,\tau_2)$}
      \RightLabel{\textsc{fst}}
      \UnaryInfC{$\Gamma \vdash t.\mathtt{fst} : \tau_1$}
    \end{prooftree}
  \end{minipage}
  \hfill
  \begin{minipage}{0.4\textwidth}
    \begin{prooftree}
      \AxiomC{$\Gamma \vdash t:\mathtt{LPair}(\tau_1,\tau_2)$}
      \RightLabel{\textsc{snd}}
      \UnaryInfC{$\Gamma \vdash t.\mathtt{snd} : \tau_2$}
    \end{prooftree}
  \end{minipage}
  \hfill
  \vspace{1em}

  \begin{minipage}{\textwidth}
    \begin{prooftree}
      \AxiomC{$\Gamma\vdash t_1:\tau_1$}
      \AxiomC{$\Gamma\vdash t_2:\tau_2$}
      \RightLabel{\textsc{LPair}}
      \BinaryInfC{$\Gamma \vdash \cocase{\mathtt{fst}\Rightarrow t_1, \mathtt{snd}\Rightarrow t_2} : \mathtt{LPair}(\tau_1,\tau_2)$}
    \end{prooftree}
  \end{minipage}
  \hfill
  \vspace{1em}
 
  \begin{minipage}{0.55\textwidth}
    \begin{prooftree}
      \AxiomC{$\Gamma \vdash t_1 : \tau_1 \rightarrow \tau_2 \quad \Gamma \vdash t_2 : \tau_1$}
      \RightLabel{\textsc{App}}
      \UnaryInfC{$\Gamma \vdash t_1\ t_2 : \tau_2$}
    \end{prooftree}
  \end{minipage}
  \hfill
  \begin{minipage}{0.4\textwidth}
    \begin{prooftree}
      \AxiomC{$\Gamma,x\prd\tau_1 \vdash t:\tau_2$}
      \RightLabel{\textsc{Lam}}
      \UnaryInfC{$\Gamma \vdash \lambda x.t : \tau_1 \rightarrow \tau_2$}
    \end{prooftree}
  \end{minipage}
  \hfill
  \vspace{1em}

  \begin{minipage}{0.55\textwidth}
    \begin{prooftree}
      \AxiomC{$\Gamma \vdash t : \tau$}
      \AxiomC{$\alpha \cnt \tau \in \Gamma$}
      \RightLabel{\textsc{Goto}}
      \BinaryInfC{$\Gamma \vdash \jump{t}{\alpha} : \tau'$}
    \end{prooftree}
  \end{minipage}
  \begin{minipage}{0.4\textwidth}
    \begin{prooftree}
      \AxiomC{$\Gamma, \alpha \cnt \tau \vdash t : \tau$}
      \RightLabel{\textsc{Label}}
      \UnaryInfC{$\Gamma \vdash \labelterm{\alpha}{t} : \tau$}
    \end{prooftree}
  \end{minipage}
\end{minipage}
\hfill\\

To check a program, we start with the empty program, which we know is well-formed (\textsc{Wf-Empty}), and then add one definition at a time. 
A definition is then well-typed if there are types $\overline{\tau_i}$ and $\overline{\tau_j}$ for its arguments such that its body is well-typed.
Because we explicitly allow recursive definitions, the body $t$ might contain the name $\text{f}$ as well.
Thus, while we typecheck $t$, we add the definition of $\text{f}$ to the program and assume it is well-typed.
After finding $\overline{\tau_i}, \overline{\tau_j}$ and $\tau$, these are added to the program as well, that is, well-formed programs contain type annotations while definitions on their own do not.
This way, these types can be used while checking types of calls (in rule \textsc{Call}).
\\
\vspace{1em}
\begin{minipage}{0.25\textwidth}
  \begin{prooftree}
    \AxiomC{}
    \RightLabel{\textsc{Wf-Empty}}
    \UnaryInfC{$\vdash \wellformed{\emptyset}$}
  \end{prooftree}
\end{minipage}
\hfill
\begin{minipage}{0.7\textwidth}
  \begin{prooftree}
    \AxiomC{$\vdash \wellformed{P}$}
    \AxiomC{$\overline{x\prd\tau_i},\overline{\alpha\cnt\tau_j} \xvdash{P,\mathbf{def}\ \text{f}(\overline{x_i:\tau_i},\overline{\alpha_j\cnt\tau_j}):\tau\coloneq t} t : \tau$}
    \RightLabel{\textsc{Wf-Cons}}
    \BinaryInfC{$\vdash \wellformed{P,\mathbf{def}\ \text{f}(\overline{x_i:\tau_i},\overline{\alpha_j\cnt\tau_j}) : \tau \coloneq t}$}
  \end{prooftree}
\end{minipage}
\hfill

\section{Operational Semantics of $\mathbf{label}/\mathbf{goto}$}
\label{sec:appendix:semantics-label}
The full operational semantics for the $\mathbf{label}/\mathbf{goto}$ construct is in essence the same as for $\mathtt{let/cc}$.
To make it precise, we promote evaluation contexts to runtime values.

We first repeat \cref{def:focusing:evaluationcontexts} of evaluation contexts with one change: $\labelterm{\alpha}{E}$ is not an evaluation context. We reduce a $\mathbf{label}$ as soon as it comes into evaluation position.
\[
  \begin{array}{rcl}
    E & \Coloneqq & \square \mid E\odot t \mid \valueof{t} \odot E \mid \ifzero{E}{t}{t} \mid \letin{x}{E}{t} \mid f(\overline{\valueof{t}},E,\overline{t})\mid K(\overline{\valueof{t}},E,\overline{t})\\
      & \mid & \case{E}{\overline{K(\overline{x}) \Rightarrow t}}\mid E\ t \mid \valueof{t}\ E \mid E.D(\overline{t}) \mid \valueof{t}.D(\overline{\valueof{t}},E,\overline{t}) \mid \jump{E}{\alpha}
  \end{array}
\]
Now we add them as another form of value
\[
  \valueof{t} \Coloneqq \ldots \mid E
\]
Note that these values only exist at runtime, that is, they cannot appear in expressions before evaluation has started.
They are typed as consumers, which means that they are the only values with a consumer type.
This makes sure that we can substitute them for covariables.
Their typing can be captured by the following rule.
\begin{prooftree}
  \AxiomC{$x \prd \tau \vdash E[x] : \tau_0$}
  \RightLabel{\textsc{Ctx}}
  \UnaryInfC{$\vdash E \cnt \tau$}
\end{prooftree}
The rule means that if the hole of a context $E$ expects an expression of type $\tau$ to be plugged in, then we have $E \cnt \tau$.

Now we can give the evaluation rules for $\mathbf{label}$ and $\mathbf{goto}$:
\begin{equation*}
  E[\labelterm{\alpha}{t}] \reducesto E[t[E/\alpha]] \qquad\qquad
  E^{\prime}[\jump{\valueof{t}}{E}] \reducesto E[\valueof{t}]
\end{equation*}
In the rule for $\mathbf{label}$, the surrounding evaluation context $E$ is reified as a value and then substituted for the covariable $\alpha$ in the body $t$.
Note that $E$ is not removed, i.e., evaluation continues in this context.
In particular, if $\alpha$ does not occur free in $t$, then the $\mathbf{label}$ is effectively a no-op.
We can also see that the types are correct: If $t$ has type $\tau$, then so does $\labelterm{\alpha}{t}$ and consequently we have $E \cnt \tau$ which is the same type as that of $\alpha$.
In the rule for $\mathbf{goto}$ the covariable must have already been replaced by an evaluation context, which is ensured if the evaluated term was closed and well-typed, because the only way to introduce a covariable is through a $\mathbf{label}$.
The evaluation step then removes and discards the surrounding context $E^{\prime}$ and continues evaluation by plugging the value $\valueof{t}$ into the previously reified context $E$.
Note that $E^{\prime}$ cannot contain $\mathbf{label}$s, as they are not evaluation contexts.
This ensures that there is no risk of removing a binder for a free variable in $\valueof{t}$.
Together these two rules also allow us to simulate the approximate rule from \cref{subsec:focusing:fun}:
\begin{equation*}
  E[\labelterm{\alpha}{E^{\prime}[\jump{\valueof{t}}{\alpha}]}]
  \reducesto E[E^{\prime}[\jump{\valueof{t}}{E}]]
  \reducesto E[\valueof{t}]
\end{equation*}

The rule for $\mathbf{goto}$ also is the reason why the theorem of strong preservation does not immediately hold (see the discussion in \cref{subsec:typing:theorems}).
The problem is that from this rule and the given typing rules for \surfacelang\ it is not immediate that the evaluation contexts $E^{\prime}$ and $E$ yield a term of the same type when filling their holes, so that the overall type of the term may not be preserved.
But this cannot actually happen, because all other reduction rules preserve the overall type and hence all evaluation contexts that are reified by the rule for $\mathbf{label}$ must yield a term of that same overall type when their holes are filled.
Therefore, also the rule for $\mathbf{goto}$ is type-preserving.
This can be made precise by explicitly tracking the overall type in the type system (see, e.g., Section 6 in \cite{Wright1994Soundness}).

\bibliography{bibliography/bibliography.bib, bibliography/ownpublications.bib}

\end{document}